\documentclass[12pt]{article}
\bibliographystyle{splncs}
\usepackage{amsmath}
\usepackage{amssymb} 
\usepackage[utf8]{inputenc}
\usepackage{etoolbox,fp}
\usepackage[dvipsnames,usenames]{xcolor}
\usepackage{tikz}
\usetikzlibrary{shapes}
\oddsidemargin 1.5mm
\evensidemargin 1.5mm
\topmargin 4mm
\headsep   0mm
\textheight = 45\baselineskip
\textwidth 150mm 
\newtheorem{theorem}{Theorem}[section]

\newtheorem{lemma}[theorem]{Lemma}

\newtheorem{corollary}[theorem]{Corollary}
\newtheorem{remark}[theorem]{Remark}
\newif\ifnotesw\noteswtrue
\newcommand{\comm}[1]{\ifnotesw $\blacktriangleright${\sf #1}$\blacktriangleleft$ \fi}
\setlength{\marginparwidth}{30mm}
\setlength{\marginparpush}{-5ex}

\noteswfalse	
\newcommand{\bull}{\mbox{$\;\;\;$\vrule height .9ex width .8ex depth -.1ex}}
\newenvironment{proof}{\par\smallbreak\noindent{\bf Proof.~}}
{\unskip\nobreak\hfill \bull \par\medbreak}
\newcounter{clm}
\renewcommand{\theclm}{\Alph{clm}}
\newenvironment{clm}{\refstepcounter{clm}%
\par\medskip\par\noindent{\it Claim~\theclm.~}~\rm}%
{\par\smallskip\par}
\newenvironment{subproof}{\par\noindent{\sl Proof of Claim~\theclm.~}}%
{$\,\triangleleft$\par\medskip\par}

\newcommand{\hide}[1]{}
\newcommand{\refeq}[1]{(\ref{eq:#1})}
\newcommand{\setdef}[2]{\left\{ \hspace{0.5mm} #1 : \hspace{0.5mm} #2 \right\}}
\newcommand{\bbz}{\mathbb{Z}}

\newcommand{\e}[1]{||#1||}
\newcommand{\dd}[1]{\mathit{depth}(#1)}
\newcommand{\len}[1]{\mathit{length}(#1)}
\newcommand{\lenast}[1]{\mathit{length}^\ast(#1)}
\newcommand{\s}[1]{\mathit{size}(#1)}
\newcommand{\Time}[1]{\mathit{time}(#1)}

\newcommand{\game}{\mbox{\sc Game}}
\newcommand{\cert}{\mathrm{Cert}}
\newcommand{\cnt}{c}
\newcommand{\ACgraph}{{AC'13}}
\newcommand{\DOMINO}{{\small\scshape Domino}}
\newcommand{\COWHEELS}{{\small\scshape Co-Wheels}}

	\newcommand{\GAME}{{\small\scshape AC-Problem}}
	\newcommand{\Hornsat}{{\scshape Horn-Sat}}
	\usepackage{algorithm}
	\usepackage{algpseudocode}
	\algnotext{EndFor}
	\algnotext{EndIf}
			\newcommand{\problembox}[3]{\smallskip\par
			\begin{center}
				\fbox{\parbox{.9\columnwidth}{
					#1\par \vspace{2mm}
					\begin{tabular}{rl}
							\textit{Input}: & #2\\
							\textit{Question}: & #3 
					\end{tabular}
				}}
			\end{center}\smallskip
		}

\colorlet{acolor}{Apricot!30}
\colorlet{bcolor}{blue}
\colorlet{ccolor}{cyan}
\colorlet{dcolor}{Dandelion}

\tikzset{
a/.style={->},
aa/.style={<-},
aaa/.style={->},
aarrow/.style={draw=blue, ultra thick,<-},
barrow/.style={draw=Red!70, ultra thick,<-},
  avertex/.style={circle,draw,inner sep=2pt,fill=acolor},
  aavertex/.style={circle,draw,inner sep=2.5pt,fill=Apricot},
bvertex/.style={circle,draw,inner sep=2pt,fill=bcolor},
rvertex/.style={regular polygon,regular polygon sides=6,draw,inner sep=2pt,fill=Red},
cvertex/.style={circle,draw,inner sep=2pt,fill=ccolor},
dvertex/.style={regular polygon,regular polygon sides=6,draw,inner sep=2.1pt,fill=dcolor},
  uvertex/.style={circle,draw=Black,inner sep=2pt,fill=white},
  schvertex/.style={regular polygon,regular polygon sides=6,draw,inner sep=2pt,fill=Red},
  vertex/.style={circle,draw=Gray,inner sep=2pt,fill=Gray}
}

\title{On the speed of constraint propagation and\\ 
the time complexity of arc consistency testing}

\author{
Christoph Berkholz \\[2mm]
\normalsize
RWTH Aachen University \\
\normalsize
Institut f\"ur Informatik \\
\normalsize
D-52056 Aachen
\and
Oleg Verbitsky\thanks{%
Supported by DFG grant VE 652/1--1.
On leave from the Institute for Applied Problems of Mechanics and Mathematics,
Lviv, Ukraine.}\\[2mm]
\normalsize
Humboldt-Universit\"at zu Berlin\\
\normalsize
Institut f\"ur Informatik\\ 
\normalsize
Unter den Linden 6\\
\normalsize
D-10099 Berlin
}

\date{}

\begin{document}

\maketitle

\begin{abstract}
Establishing arc consistency on two relational structures
is one of the most popular heuristics for the constraint satisfaction problem.
We aim at determining the time complexity of arc consistency testing.
The input structures $G$ and $H$ can be supposed to be connected
colored graphs, as the general problem reduces to this particular case.
We first observe the upper bound $O(e(G)v(H)+v(G)e(H))$, which implies
the bound $O(e(G)e(H))$ in terms of the number of edges and
the bound $O((v(G)+v(H))^3)$ in terms of the number of vertices. 
We then show that both bounds are tight
up to a constant factor as long as an arc consistency algorithm is based
on constraint propagation (like any algorithm currently known).

Our argument for the lower bounds is based on examples of slow constraint
propagation. We measure the speed of constraint propagation
observed on a pair $G,H$ by the size of a proof, in a natural combinatorial proof system,
that Spoiler wins the existential 2-pebble game on $G,H$.
The proof size is bounded from below by the game length $D(G,H)$, and
a crucial ingredient of our analysis is the existence of $G,H$
with $D(G,H)=\Omega(v(G)v(H))$. 
We find one such example among old benchmark instances
for the arc consistency problem and also suggest a new, different construction.
\end{abstract}

\section{Introduction}
According to the framework of \cite{Feder.1998}, the \emph{constraint satisfaction problem} (CSP) takes two finite relational structures as input and asks whether there is a homomorphism between these structures. In this paper we consider structures with unary and binary relations and refer to unary relations as colors and to binary relations as directed edges. In fact, most of the time we deal with structures having only one binary, symmetric and irreflexive relation $E$, i.e., with vertex-colored graphs. This is justified by a linear time reduction from the CSP on binary structures to its restriction on colored graphs;
see Section \ref{sec:reduction}. 
Note that the CSP restricted to colored graphs and digraphs has also been studied under the name ``List Homomorphism'' from an algebraic point of view.

Let $G$ and $H$ be an input of the CSP. It is customary to call the vertices of $G$ \emph{variables} and the vertices of $H$ \emph{values}. A mapping from $V(G)$ to $V(H)$ then corresponds to an assignment of values to the variables, and the assignment is \emph{satisfying} if the mapping defines a homomorphism. 
Let a \emph{domain} $D_x\subseteq V(H)$ of a variable $x\in V(G)$ be a set of admissible assignments to this variable. Formally, $D_x$ is a domain if for every homomorphism $h: G\to H$ it holds that $h(x)\in D_x$. 
The aim of the arc consistency heuristic is to find small domains in order to shrink the search space. 
The first step of the arc consistency approach is to ensure \emph{node consistency}, that is, $D_x$ is initialized to the set of vertices in $H$ that are colored with the same color as $x$. 
The second step is to iteratively shrink the domains according to the following rule: 

\begin{quote}\label{quote:AC-rule}
If there exists an $a\in D_x$ and a variable $y\in V(G)$ such that $\{x,y\}\in E(G)$ and $\{a,b\}\notin E(H)$ for all $b\in D_y$, then delete $a$ from $D_x$. 
\end{quote}
 
A pair of graphs augmented with a set of domains
is \emph{arc consistent} if the above rule cannot be applied and all domains are nonempty. We say that arc consistency \emph{can be established} for $G$ and $H$, if there exists a set of domains such that $G$ and $H$ augmented with these domains is arc consistent. 
Our aim is to estimate the complexity of the following decision problem.
\problembox{\GAME{}}{Two colored graphs $G$ and $H$.}{Can arc consistency be established on $G$ and $H$?}
Using known techniques for designing arc consistency algorithms, we observe that the \GAME{} can be solved in time $O(v(G)e(H)+e(G)v(H))$, where $v(G)$ and $e(G)$ denote the number of vertices and
the number of edges respectively. 
Since this gives us only a quadratic upper bound in terms of the overall input size, there could be a chance for improvement: Is it possible to solve the \GAME{} in sub-quadratic or even linear time? In fact, we cannot rule out this possibility completely. The first author \cite{Berkholz.2012} recently obtained lower bounds for higher levels of $k$-consistency (note that arc consistency is equivalent to 2-consistency). In particular, 15-consistency cannot be established in linear time and establishing 27-consistency requires more than quadratic time on multi-tape Turing machines. The lower bounds are obtained in \cite{Berkholz.2012} via the deterministic time hierarchy theorem and, unfortunately, these methods are not applicable to arc consistency because of the blow-up in the reduction. 

However, we show lower bounds for every algorithm that is based on constraint propagation. A \emph{propagation-based arc consistency algorithm} is an algorithm that solves the \GAME{} by iteratively shrinking the domains via the arc consistency rule above. Note that all currently known arc consistency algorithms 
(as e.g. AC-1, AC-3 \cite{Mackworth.1977}; AC-3.1/AC-2001 \cite{Bessiere.2006}; AC-3.2, AC-3.3 \cite{Lecoutre.2003};
AC-3$_d$ \cite{Dongen.2002}; 
AC-4 \cite{Mohr.1986}; AC-5 \cite{VanHentenryck.1992}; 
AC-6 \cite{Bessiere.1994}; 
AC-7 \cite{Bessiere.1999}; 
AC-8 \cite{Chmeiss.1998}, 
AC-$\ast$ \cite{Regin.2005})
are propagation-based in this sense. Different AC algorithms differ in the principle of ordering propagation steps;
for a general overview we refer the reader to \cite{Bessiere.2006}. 
The upper bound $O(v(G)e(H)+e(G)v(H))$ implies $O(e(G)e(H))$ in terms of the number of edges and $O(n^3)$ in terms of the number of vertices $n=v(G)+v(H)$. Our main result, Theorem \ref{thm:time-lower} in Section \ref{sec:time}, states that both bounds are tight up to a constant factor for any propagation-based algorithm.

We obtain the lower bounds by exploring a connection between the \emph{existential 2-pebble game} and propagation-based arc consistency algorithms.
In its general form the existential $k$-pebble game is an Ehrenfeucht-Fra\"iss\'e like game that determines whether two finite structures can be distinguished in the existential-positive $k$-variable fragment of first order logic. It has found applications also outside of finite model theory: to study the complexity and expressive power of Datalog \cite{Kolaitis.1995}, $k$-consistency tests \cite{Kolaitis.2000a,Kolaitis.2003,Atserias.2007,Berkholz.2012} and bounded-width resolution \cite{Atserias.2008,Berkholz.2012a}. 
It turns out that the existential 2-pebble game exactly characterizes the power of arc consistency \cite{Kolaitis.2000a}, i.e., Spoiler wins the existential 2-pebble game on two colored graphs $G$ and $H$ iff arc consistency cannot be established.

The connection between the existential 2-pebble game and arc consistency algorithms is deeper than just a reformulation of the \GAME{}. We show that every constraint propagation-based arc consistency algorithm computes in by-passing a proof of Spoiler's win on instances where arc consistency cannot be established. 
On the one hand these proofs of Spoiler's win naturally correspond to a winning strategy for Spoiler in the game. On the other hand they reflect the propagation steps performed by an algorithm. 
We consider three parameters to estimate the complexity of such proofs: length, size and depth. The length corresponds to the number of propagation steps, whereas size also takes the cost of propagation into account. The depth corresponds to the number of ``nested'' propagation steps and precisely matches the number of rounds $D(G,H)$ Spoiler needs to win the game. We observe that the minimum size of a proof of Spoiler's win on $G$ and $H$ bounds from below the running time of sequential propagation-based algorithms, whereas the minimal depth matches the running time of parallel algorithms.

We exhibit pairs of colored graphs $G,H$ where $D(G,H)=\Omega(v(G)v(H))$ and hence many nested propagation steps are required to detect arc-inconsistency. Because these graphs have a linear number of edges this implies that there is no sub-quadratic  propagation-based arc consistency algorithm. 
It should be noted that CSP instances that are hard for sequential and parallel arc consistency algorithms, in the sense that they require many propagation steps, have been explored very early in the AI-community \cite{Dechter.1985,Samal.1987}. Such examples were also proposed to serve as benchmark instances to compare different arc consistency algorithms \cite{Bessiere.2005}.
Graphs $G$ and $H$ with large $D(G,H)$ can be derived from the old \DOMINO\  example,
consisting of structures with two binary relations. We also provide a new example,
which we call \COWHEELS, that shows the same phenomenon of slow constraint propagation
for a more restricted class of rooted loopless digraphs.
\comm{The root can be modelled by a loop in the second binary relation.}

The rest of the paper is organized as follows. In Section \ref{sec:prelim} we give
the necessary information on the existential 2-pebble game and use it to analyze
the \DOMINO\ pattern. Our \COWHEELS\ pattern is introduced and analyzed in Section \ref{sec:games}.
Section \ref{sec:proofs} is devoted to the winner proof system for the  existential 2-pebble game.
The facts obtained here are used in Section \ref{sec:time}
to prove our main results on the complexity of 
propagation-based algorithms for the \GAME{}.

\section{Preliminaries}
\label{sec:prelim}

A \emph{binary structure} $A$ is a relational structure
over vocabulary $\sigma=\{E_1,E_2,\ldots,U_1,\allowbreak U_2,\ldots\}$ consisting of binary relations $E_i$, $i\ge1$,
and unary relations $U_j$, $j\ge1$. Each binary relation $E_i^A$
between elements of $A$ can be regarded as a directed graph 
with arrows $(x,y)\in E_i^A$ colored in color $i$.
Similarly, the unary relations $U_j^A$ can be thought of as colors
of elements of $A$. In this way, we can consider $A$ an edge- and vertex-colored
directed graph. The elements of $A$ will be then called \emph{vertices}.
The set of the elements of $A$ will be denoted by $V(A)$ and their number by $v(A)$.

In \emph{colored graphs} we additionally have unary relations of
vertex colors, i.e., $\sigma=(E_1,U_1,U_2,\ldots)$.
Moreover, it is supposed that any two color classes $U^A_j$ and $U^A_{j'}$
are disjoint.
The number of edges in a colored graph $A$ is denoted by $e(A)$.

The \emph{existential $2$-pebble game on binary structures $A$ and $B$}
is played by two players, Spoiler and Duplicator,
to whom we will refer as he and she respectively.
The players have equal sets of two pairwise different pebbles, $p$ and $q$.
A \emph{round} consists of a move of Spoiler followed by a move of
Duplicator. Spoiler takes a pebble, $p$ or $q$, and puts it on a vertex in $A$.
Then Duplicator has to put her copy of this pebble on a vertex 
of $B$. Duplicator's objective is to keep the following
condition true after each round: the pebbling should determine a partial
homomorphism from $A$ to~$B$.

Let $x\in V(A)$ and $u\in V(B)$ denote the vertices pebbled by $p$
and $y\in V(A)$ and $v\in V(B)$ denote the vertices pebbled by $q$.
Thus, Duplicator loses as soon as $x\in U_j^A$ while $u\notin U_j^B$ 
for some $j$, or $(x,y)\in E_i^A$ while $(u,v)\notin E_i^B$,
 or $(y,x)\in E_i^A$ while $(v,u)\notin E_i^B$ for some $i$, or
$x=y$ while $u\ne v$.

For each positive integer $r$, the $r$-round $2$-pebble existential game
on $A$ and $B$
is a two-person game of perfect information with a finite number of positions.
By Zermelo's theorem, either Spoiler or Duplicator has a \emph{winning strategy}
in this game, that is, a strategy winning against every strategy of the opponent.
Let $D(A,B)$ denote the minimum $r$ for which Spoiler has a winning strategy.
If such $r$ does not exist, we will write $D(A,B)=\infty$.
As it is well known \cite{Kolaitis.1995},
$D(A,B)\le r$ if and only if $A$ can be distinguished from $B$
by a sentence of quantifier rank $r$ in the existential-positive two-variable logic.
The \emph{existential-positive} fragment of first-order logic consists of formulas
containing only monotone Boolean connectives and only existential quantifiers
(thus, negation and universal quantification is forbidden).

Suppose that $D(A,B)<\infty$. We say that Spoiler plays \emph{optimally} if
he never loses an opportunity to win as soon as possible. More specifically, after a round
is ended in a position $P$ (determined by the pebbled vertices), Spoiler
makes the next move according a strategy that allows him to win from the position $P$
in the smallest possible number of rounds.

\begin{lemma}\label{lem:optimal-str}
If Spoiler plays optimally, then the following
conditions are true.
\begin{enumerate}
\item 
Spoiler uses the pebbles alternatingly, say, $p$ in odd and $q$ in even rounds.
\item 
Whenever Spoiler moves a pebble, he moves it to a new position.
That is, if $x_i\in V(A)$ denotes the vertex pebbled in the $i$-th round, then
$x_{i+2}\ne x_i$.
\comm{Also $x_{i+1}\ne x_i$ if $A$ is loopless.}
Moreover, if $x_{i+1}=x_i$, then $x_{i+2}\ne x_{i-1}$.
\item 
 $(x_i,x_{i+1})$ or $(x_{i+1},x_i)$
satisfies at least one binary relation.
\end{enumerate}
\end{lemma}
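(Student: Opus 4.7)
The plan is to prove each of the three conditions by contradiction: if a condition fails at some round, I exhibit an alternative Spoiler strategy that wins in strictly fewer rounds, violating the optimality hypothesis.

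For condition~1, suppose pebble $p$ is used in two consecutive rounds $j$ and $j+1$, while pebble $q$ rests on $(y,v)$. Let Spoiler's original moves be ``$p$ on $x'$'' in round $j$ and ``$p$ on $x''(u')$'' in round $j+1$, where the latter choice may depend on Duplicator's answer $u'$. Fix any admissible $u'$ and set $x^\ast=x''(u')$. The alternative strategy plays $p$ on $x^\ast$ already in round $j$. The admissibility condition on Duplicator's answer to ``$p$ on $x^\ast$'' depends only on the position of $q$, so the set of admissible responses $\tilde u$ is the same in the two runs. For every such $\tilde u$ the position $(x^\ast,\tilde u,y,v)$ also arises in the original run on the branch where Duplicator chose $(u',\tilde u)$, and Spoiler's original residual strategy wins from it in the same number of remaining rounds. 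The alternative thus wins one round earlier, contradicting optimality.

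For the first half of condition~2, if $x_{i+2}=x_i$ then pebble $p$ is being replaced on the vertex of $A$ it already occupies; Duplicator may reply $u_{i+2}=u_i$, which is trivially valid, and Spoiler has spent round $i+2$ without altering the position. For the ``moreover'' clause, assume in addition that $x_{i+1}=x_i$ and $x_{i+2}=x_{i-1}$. The game rule $x=y\Rightarrow u=v$ forces $u_{i+1}=u_i$, so the position after round $i+1$ is $(x_i,u_i,x_i,u_i)$. In round $i+2$ Duplicator may reply $u_{i+2}=u_{i-1}$, because compatibility of $(x_{i-1},u_{i-1})$ with $(x_i,u_i)$ was already certified in round $i$. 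The resulting position $(x_{i-1},u_{i-1},x_i,u_i)$ coincides with the position after round $i$ up to swapping the roles of the two pebbles, and the 2-pebble game is invariant under such a swap; hence its minimum number of remaining rounds is the same, and Spoiler has spent rounds $i+1$ and $i+2$ without any progress.

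For condition~3, assume that no binary relation holds between $x_i$ and $x_{i+1}$ in either direction. I compare the original three-round segment ``$p$ on $x_i$, $q$ on $x_{i+1}$, $p$ on $x_{i+2}$'' with the alternative two-round segment ``$p$ on $x_{i+2}$, $q$ on $x_{i+1}$''. Both end with $p$ on $x_{i+2}$ and $q$ on $x_{i+1}$, and the technical core is to show that the set of admissible endpoint pairs $(U_1,U_2)$ in the alternative is contained in the corresponding set for the original. The absent-edge hypothesis is precisely what removes the constraint that would have linked $u_{i+1}$ to $u_i$ in the original, while the only extra constraint introduced by the alternative (coming from the pebble $q$ at $(x_{i-1},u_{i-1})$ during the alternative's round $i$) can only further shrink the admissible set. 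Since Spoiler's original residual strategy wins from every endpoint in the larger set, it a fortiori wins in the alternative, which then completes one round earlier. The main obstacle common to all three parts is tracking Spoiler's branching strategy against every Duplicator response: one must argue that the original winning strategy, which defeats all Duplicator branches, remains winning after the surgery because the positions reachable in the alternative form a subset of those reachable in the original, and the comparison of admissible-endpoint sets in condition~3 is where this bookkeeping is most delicate.
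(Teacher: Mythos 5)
Your arguments for conditions~1 and~2 are sound; they are a strategy-surgery rephrasing of the paper's own argument, which works with the quantity $R(P)$ (the minimum number of rounds Spoiler needs from position $P$) and shows that the forbidden moves fail to decrease it. Your explicit treatment of the ``moreover'' clause of condition~2 is welcome, since the paper only says it follows by a similar argument.

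The argument for condition~3, however, has a genuine gap. Your alternative strategy must commit to the vertex $x_{i+2}$ already at its round $i$, i.e.\ before Duplicator has answered the move to $x_{i+1}$. But in the original play the choice of $x_{i+2}$ is made at round $i+2$ and may depend on Duplicator's answer $u_{i+1}$ --- and $u_{i+1}$ is precisely the coordinate that survives into the endpoint position $(x_{i+2},u_{i+2},x_{i+1},u_{i+1})$. So if you fix $x^\ast=x_{i+2}(u_i,u_{i+1})$ along one branch, an endpoint $(x^\ast,\tilde u_1,x_{i+1},\tilde u_2)$ of the alternative with $\tilde u_2\ne u_{i+1}$ need not be reachable in the original at all: on the original branch where Duplicator answers $\tilde u_2$ to $x_{i+1}$, Spoiler's round-$(i+2)$ move may be some $x'\ne x^\ast$. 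Hence ``Spoiler's original residual strategy wins from every endpoint in the larger set'' is not available, and the containment of endpoint sets that you yourself flag as delicate is exactly the point that fails. (This problem does not arise in your condition~1, where the skipped response $u'$ neither constrains the final response nor appears in the endpoint, and where $x^\ast$ is chosen after $u'$ is fixed.) The paper proves condition~3 from Duplicator's side instead: since $x_i$ and $x_{i+1}$ satisfy no binary relation, Duplicator's answer $u_{i+1}$ is unconstrained, so she may choose a $u_{i+1}$ maximizing $R(x_{i+1},u)$ over all $u\in V(B)$; by the definition of the game value $R$ this maximum is at least $R-1$, so the game would last more than $R$ rounds. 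You should either switch to this Duplicator-side argument or supply a justification that $x_{i+2}$ can be taken independent of $u_{i+1}$, which does not hold for optimal strategies in general.
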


\begin{proof}
Recall that a position $P$ in the game is a tuple in $V(A)^2\times V(B)^2$ or in $V(A)\times V(B)$
consisting of the currently pebbled vertices.
By assumption, Spoiler has a strategy allowing him to win the game within some number of rounds.
Then, for every $P$ there is an $r$ such that Spoiler has a winning strategy in the $r$-round
game with the initial position $P$. Denote the smallest such $r$ by $R(P)$.
We will denote the vertex of $B$ pebbled by Duplicator in the $i$-th round by~$u_i$.

  {\bf 1.}
We first show this for the first two rounds. Let $R$ denote the minimum number $r$
such that Spoiler has a winning strategy in the $r$-round game.
It is clear that in the first round Spoiler pebbles a vertex $x_1\in V(A)$ such
that $\max_{u\in V(B)}R(x_1,u)$ is equal to the minimum possible value $R-1$.
If in the second round Spoiler just moves the pebble from $x_1$
to another vertex $x_2$, then Duplicator can
pebble a vertex $u_2$ attaining $\max_{u\in V(B)}R(x_2,u)\ge R-1$.
This allows her to win the next $r-2$ rounds, contradictory to the
fact that the optimal strategy used by Spoiler is winning in the
$R$-round game.

Assume now that Spoiler has used the pebble $p$ in the $(i-1)$-th round
and the pebble $q$ in the $i$-th round, and the game is not over yet. 
By the definition of an optimal
strategy, the value $R'=\max_{u}R(x_{i-1},x_i,u_{i-1},u)$ is minimum
possible among all choices of $x_i$. From now on Spoiler has to win
the game in at most $r'$ rounds. If, however, in the $(i+1)$-th round 
Spoiler uses the pebble $q$ again moving it from $x_i$
to $x_{i+1}$, then Duplicator can
pebble a vertex $u_{i+1}$ attaining $\max_{u}R(x_{i-1},x_{i+1},u_{i-1},u)\ge R'$.
This allows her to win the further $R'-1$ rounds, contradicting Spoiler's optimality.

  {\bf 2.}
The definition of an optimal strategy implies that, after the $i$-th round is played,
Spoiler wins in at most $R(x_{i-1},x_i,u_{i-1},u_i)$ rounds. Assume that
in the $(i+1)$-th round Spoiler pebbles $x_{i+1}=x_{i-1}$. Not to lose immediately,
Duplicator pebbles $u_{i+1}=u_{i-1}$. Starting from the next round, Duplicator
is able to stand up in $R(x_i,x_{i-1},u_i,u_{i-1})-1=R(x_{i-1},x_i,u_{i-1},u_i)-1$
rounds, which gives a contradiction.

If $x_{i+1}=x_i$, the inequality $x_{i+2}\ne x_{i-1}$ follows by a similar argument.

{\bf 3.}
Part 1 of the lemma shows that after the $i$-th round the players actually play
the game with the initial position $(x_i,u_i)$ (that is, Spoiler's optimal
strategy can be supposed to be independent of the pair $(x_{i-1},u_{i-1})$).
In particular, Spoiler has a strategy allowing him to win the rest of the game
in $R(x_i,u_i)\le R-i$ rounds, where $R$ is as defined above.
Assume that the vertices $x_i$ and $x_{i+1}$ satisfy no binary relation in $A$.
Then every choice of $u_{i+1}\in V(B)$ is non-losing for Duplicator in the $(i+1)$-th round.
If she chooses $u_{i+1}$ attaining $\max_{u\in V(B)}R(x_{i+1},u)\ge R-1$,
then she has a strategy allowing her to survive at  least $R-1$ further rounds
after the $i$-th round, a contradiction.
\end{proof}

Lemma \ref{lem:optimal-str} has several useful consequences.
The first of them implies that, without loss of generality,
we can restrict our attention to connected structures.
Two distinct vertices of a binary structure $A$ are adjacent
in its \emph{underlying graph} $G_A$ if they satisfy at least
one binary relation of $A$. \emph{Connected components} of $A$
are considered with respect to $G_A$.
  Let $A$ consist of connected components $A_1,\ldots,A_k$ and
$B$ consist of connected components $B_1,\ldots,B_l$. Then it easily
follows from part 3 of Lemma \ref{lem:optimal-str} that
$
D(A,B)=\min_i\max_j D(A_i,B_j)
$.
\comm{
\begin{proof}
Suppose that $A$ is distinguishable from $B$ in the two-variable existential-positive logic.
Then $D(A,B)$ is equal to the minimum $r$ such that 
Spoiler has an optimal winning strategy in the existential 2-pebble game on $A$ and $B$.
If in the first round Spoiler pebbles a vertex in $A_i$ and Duplicator in $B_j$,
then part 3 of Lemma \ref{lem:optimal-str} shows that, as long as Duplicator is alive,
the game is played on $A_i$ and $B_j$. This implies the claimed equality.

It remains to argue that, if $D(A,B)=\infty$, then for every $i$ there is $j$ such that
$D(A_i,B_j)=\infty$. Indeed, otherwise for some $i$ Spoiler would have a winning strategy 
in the game on $A_i$ and $B_j$ for each $j$ and, playing optimally, he then could
win the game on $A$ and $B$ because, again by part 3 of  Lemma~\ref{lem:optimal-str},
the game would be restricted to $A_i$ and the component $B_j$ determined by the first
move of Duplicator.
\end{proof}
}
Another consequence follows from parts 2 and~3.

\begin{corollary}
  \label{cor:A-tree}
Suppose that the underlying graph $G_A$ of $A$ is a tree.
  If $D(A,B)<\infty$, then $D(A,B)<2\,v(A)$.
\end{corollary}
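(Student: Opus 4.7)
The plan is to fix $r = D(A,B)$ and an optimal winning strategy for Spoiler, record the sequence $x_1, x_2, \ldots, x_r$ of vertices of $A$ that Spoiler pebbles, extract from it a non-backtracking walk in $G_A$, and then exploit acyclicity of the tree.

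The first step is to form the reduced walk $y_1, y_2, \ldots, y_s$ obtained by merging consecutive repetitions in $(x_i)$. A repetition $x_{i+1} = x_i$ can only occur if $A$ carries a self-loop at $x_i$ (note that $G_A$ itself is loopless by definition), and by part~2 of Lemma~\ref{lem:optimal-str} no two successive rounds may both be stays. Using parts~2 and~3 of that lemma I would then verify that $(y_j)$ is a non-backtracking walk in $G_A$: adjacency of consecutive $y_j$'s comes from part~3 after the collapsing, while $y_{j+2} \neq y_j$ follows by a short case split on whether a stay occurs immediately before or after the relevant round; the subcase in which $x_{i+1} = x_i$ is precisely what the second sentence of part~2 is crafted for.

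Because $G_A$ is a tree, every non-backtracking walk in it is a simple path (a closed non-backtracking walk would have to traverse some edge twice in opposite directions, contradicting acyclicity), whence $s \leq v(A)$. To convert this into a bound on $r$, I would observe that round $r$ cannot itself be a stay: if $x_r = x_{r-1}$, Duplicator can reply with $u_r = u_{r-1}$, leaving intact the valid partial homomorphism $x_{r-1} \mapsto u_{r-1}$ that survived round $r-1$, contradicting the optimality of Spoiler's win at round $r$. Hence the $s$ actual moves occur at rounds $1 = i_1 < i_2 < \cdots < i_s = r$ with $i_{j+1} - i_j \leq 2$ (since stays cannot be consecutive), giving $r = i_s \leq 1 + 2(s-1) = 2s - 1 \leq 2v(A) - 1 < 2v(A)$. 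The main obstacle I anticipate is the case analysis ensuring that the reduced walk is genuinely non-backtracking in the presence of self-loops of $A$; once that is in hand, the rest is a routine accounting argument.
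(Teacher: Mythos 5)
Your argument is correct and follows essentially the same route as the paper: by parts 2 and 3 of Lemma~\ref{lem:optimal-str} an optimally playing Spoiler traces a non-backtracking walk in $G_A$, which in a tree must be a simple path, with the factor $2$ absorbing the possible stays at self-loops of $A$. The paper compresses all of this into three sentences, bounding the number of rounds by $2\,d(G_A)+1<2\,v(A)$ with $d(G_A)$ the diameter, whereas you make explicit the collapsing of stays, the non-backtracking case analysis, and the final accounting.
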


\begin{proof}
Consider the existential 2-pebble game on $A$ and $B$ and assume that
Spoiler follows an optimal strategy.
 By part 3 of  Lemma \ref{lem:optimal-str}, he all the time moves the pebbles
along a path in $G_A$. By part 2 of the lemma, he never turns back.
Since $G_A$ is a tree, the game lasts at most $2\,d(G_A)+1<2\,v(A)$ rounds,
where $d(G_A)$ denotes the diameter of~$G_A$.
\comm{Is this bound optimal? Can the diameter be replaced with the radius?
Is there a similar bound for the full two-variable existential logic?}
\end{proof}

Furhtermore, we now can state a general upper bound for~$D(A,B)$.

\begin{corollary}
  \label{cor:upper}
If $D(A,B)<\infty$,
then $D(A,B)\le v(A)v(B)+1$.
\end{corollary}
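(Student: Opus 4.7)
The plan is to track the game position when both players play optimally and apply pigeonhole on $V(A)\times V(B)$. Set $R=D(A,B)$, and for any single-pebble position $(y,v)\in V(A)\times V(B)$ let $R_1(y,v)$ denote the minimum number of rounds in which Spoiler can force a win starting from the configuration where only one pair of pebbles is placed, on $y$ and $v$. By part~1 of Lemma~\ref{lem:optimal-str}, Spoiler's optimal strategy alternates pebbles, so once round $i\ge 1$ has ended in the two-pebble position $(x_{i-1},x_i,u_{i-1},u_i)$, the pair of pebbles Spoiler will move next is the one currently on $(x_{i-1},u_{i-1})$. The first observation is that, because this pair is about to be lifted, the vertices $x_{i-1}$ and $u_{i-1}$ become irrelevant to the rest of the play; hence the game from this moment onward is indistinguishable from the single-pebble game starting at $(x_i,u_i)$, and under optimal play will last exactly $R_1(x_i,u_i)$ further rounds.

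Next I would establish by induction that $R_1(x_i,u_i)=R-i$ for every $i=1,\ldots,R-1$ when both players are optimal. The base case $i=1$ is immediate from the analysis of the first round in the proof of Lemma~\ref{lem:optimal-str}: Spoiler chooses $x_1$ so that $\max_u R_1(x_1,u)=R-1$, and Duplicator's optimal response $u_1$ attains this maximum. For the inductive step, once $R_1(x_i,u_i)=R-i$, Spoiler's optimality in round $i+1$ dictates that he picks $x_{i+1}$ minimizing $\max_u R_1(x_{i+1},u)$, with minimum value $R_1(x_i,u_i)-1=R-i-1$; Duplicator picks $u_{i+1}$ attaining this maximum, so that $R_1(x_{i+1},u_{i+1})=R-i-1$.

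The upper bound then falls out by pigeonhole. The pairs $(x_1,u_1),(x_2,u_2),\ldots,(x_{R-1},u_{R-1})$ are pairwise distinct, because $R_1(x_i,u_i)=R-i$ takes a distinct value for each $i$. As all of them live in $V(A)\times V(B)$, a set of size $v(A)v(B)$, we conclude $R-1\le v(A)v(B)$, which is the desired bound. The argument has no real obstacle; the only small subtlety is the \emph{forgetting} step in the first paragraph, which just rephrases part~1 of Lemma~\ref{lem:optimal-str} in a form suited to the induction on~$i$.
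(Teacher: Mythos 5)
Your proposal is correct and follows essentially the same route as the paper: reduce to the single-pebble positions $(x_i,u_i)$ via part~1 of Lemma~\ref{lem:optimal-str}, observe that these positions are pairwise distinct, and apply pigeonhole on $V(A)\times V(B)$. The only difference is that you make explicit (via the strictly decreasing value $R_1(x_i,u_i)=R-i$) the step the paper dismisses with ``it readily follows,'' which is a legitimate filling-in rather than a new idea.
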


\begin{proof}
Assume that Spoiler plays optimally.
Let $x_i\in V(A)$ and $u_i\in V(B)$ denote the vertices pebbled in the $i$-th round.
 By part 1 of  Lemma \ref{lem:optimal-str}, we can further assume that
Spoiler's move in the $(i+1)$-th round depends only on the $(x_i,u_i)$.
It readily follows that, if the game lasts $r$ rounds, then the pairs
$(x_1,u_1),\ldots,(x_{r-1},u_{r-1})$ are pairwise different, and hence
$r-1\le v(A)v(B)$.
\comm{The $+1$ term in the bound can be removed if there are pairs $(x,u)$
where Duplicator is dead immediately because $x$ has some color while $u$
doesn't. Can it be removed in general, for example, if $A$ and $B$
are digraphs with 2 colors for edges and no colors for vertices?}
\end{proof}

\begin{figure}
\centering
 \begin{tikzpicture}
  \FPset{\r}{2}
\begin{scope}[scale=0.5]
  \node at (-\r-3 cm,\r+1 cm) {$A_5$};
  \foreach \i in {0,...,4}
   {
    \node[uvertex] (x\i) at (\i*72:\r cm) {};
    \FPsub{\j}{\i}{1}
    \FPround{\j}{\j}{0}
    \ifnumgreater{\i}{0} 
{\draw[barrow] (x\j) -- (x\i);}{}
    }
    \draw[barrow] (x4) -- (x0);
    \draw[aarrow] (x2) -- (x3);
\end{scope}

\begin{scope}[xshift=40mm,scale=0.7]
  \node at (-1cm,\r+.7 cm) {$B_7$};
  \foreach \i in {0,...,6}
   {
    \node[uvertex] (u\i) at (\i cm, 0 cm) {};
    \path (u\i) edge[
out=120, in=60, looseness=0.8, loop, distance=1cm, ->, ultra thick, draw=Red!70] (u\i);
    \FPsub{\j}{\i}{1}
    \FPround{\j}{\j}{0}
    \ifnumgreater{\i}{0} 
      {\draw[aarrow] (u\i) -- (u\j);
      }{}
    }
 \end{scope}
 \end{tikzpicture}
\caption{The \DOMINO\  example.}
\label{fig:domino}
\end{figure}

The bound of Corollary \ref{cor:upper} is tight, at least, up to a factor of $1/2$.
A suitable lower bound can be obtained from the CSP instances that appeared
in \cite{Dechter.1985,Bessiere.2005}
under the name of \emph{DOMINO problem} and were used for benchmarking
the arc consistency algorithms.
A \DOMINO\  instance consists of two digraphs $A_m$ and $B_n$ whose arrows are colored
in red and blue; see Fig.~\ref{fig:domino}. $A_m$ is a directed cycle of length $m$ with
one blue and the other red arrows. $B_n$ is a blue directed path where red loops
are attached to all its $n$ vertices. Spoiler can win the existential 2-pebble game
on $A_m$ and $B_n$ by moving the pebbles along the cycle $A_m$, always in the same
direction. By Lemma \ref{lem:optimal-str}, this is the only way for him to win
in the minimum number of rounds. When Spoiler passes red edges, Duplicator
stays with both pebbles at the same vertex of $B_n$. Only when Spoiler passes the blue edge,
Duplicator passes one (blue) edge forward in $B_n$. Thus, if Duplicator
starts playing in the middle of $B_n$, she survives in at least $\frac12\,m(n-1)$
rounds.

\section{More examples of slow constraint propagation}
\label{sec:games}

The \DOMINO\  pairs are remarkable examples of binary structures on which
constraint propagation is as slow as possible, up to a constant
factor of $1/2$. An important role in the \DOMINO\  example is played
by the fact that we have two different edge colors.
We now show that essentially the same lower bound
holds true over a rather restricted class of structures, namely \emph{rooted
loopless digraphs}, where edges are uncolored,
there is a single color for vertices, and only a single \emph{root} vertex is colored in it.
It is also supposed that every vertex of a rooted digraph is reachable
from the root along a directed path.

\begin{figure}
\centering
 \begin{tikzpicture}[scale=0.5,>=stealth]
  \FPset{\r}{2}
\begin{scope}
  \node at (-\r cm,\r cm) {$G_4$};
  \node[rvertex] (r) at (0,0) {};
  \foreach \i in {0,...,3}
   {
    \node[uvertex] (x\i) at (\i*90:\r cm) {};
    \FPsub{\j}{\i}{1}
    \FPround{\j}{\j}{0}
    \ifnumgreater{\i}{0} 
{\draw[<-] (x\j) -- (x\i);}{}
    }
    \draw[<-] (x3) -- (x0);
    \draw[->] (r) -- (x0);
   \node[right] (egal) at (x0) {$\,\,x_0$};
\end{scope}

\begin{scope}[xshift=90mm]
  \node at (-\r cm,\r cm) {$H_5$};
   \node[rvertex] (r) at (0,0) {};
  \foreach \i in {0,...,4}
   {
    \node[uvertex] (x\i) at (\i*72:\r cm) {};
    \FPsub{\j}{\i}{1}
    \FPround{\j}{\j}{0}
    \ifnumgreater{\i}{0} 
{\draw[<-] (x\j) -- (x\i); \draw[->] (r) -- (x\i);}{}
    }
    \draw[<-] (x4) -- (x0);
   \node[right] (egal) at (x0) {$\,\,a_0$};
\end{scope}
 \end{tikzpicture}
\caption{An example of \COWHEELS\ .}
\label{fig:cowheels}
\end{figure}

By the \emph{wheel} $W_n$ we mean the rooted digraph with $n+1$ vertices
where there are arrows from the root to all the other $n$ vertices
and these vertices form a directed cycle. We call a pair of rooted digraphs
$G_m$ and $H_n$ \emph{co-wheels} if $G_m$ is obtained from $W_m$ by removal
of all but one arrows from the root and $H_n$ is obtained from $W_n$ by removal
of one arrow from the root; see an example in Fig.~\ref{fig:cowheels}.

\begin{lemma}
  \label{lem:cowheels}
Let $G_m$ and $H_n$ be co-wheels.
If $m$ and $n$ are coprime, then $D(G_m,\allowbreak H_n)<\infty$  and $D(G_m,H_n)>\frac12\,m(n-3)$.
\end{lemma}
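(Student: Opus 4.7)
The plan is to prove both assertions by constructing explicit strategies and using Lemma~\ref{lem:optimal-str} to constrain the geometry of optimal play. Let $x_0$ denote the unique cycle vertex of $G_m$ that receives an arrow from $r_G$, and $a_0$ the unique cycle vertex of $H_n$ that does \emph{not} receive an arrow from $r_H$. For $D(G_m,H_n)<\infty$ I would exhibit Spoiler's winning strategy: pebble $r_G$ in round~1, $x_0$ in round~2, and afterwards walk monotonically around the cycle of $G_m$ in one fixed orientation. The color constraint forces Duplicator's first reply to be $r_H$; the edge $r_G\to x_0$ forces her second reply to be some $a_i$ with $i\ne 0$; and at every subsequent cycle step the color mismatch rules out $r_H$, leaving the next vertex along the cycle of $H_n$ in the matching orientation as the unique legal reply. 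Since $\gcd(m,n)=1$, the residues $\{-jm\bmod n : j\ge 0\}$ sweep over $\mathbb{Z}_n$, so at some round $2+jm$ with $j\le n-1$ Spoiler's pebble lies on $x_0$ while Duplicator's lies on $a_0$; moving the other pebble to $r_G$ then springs the trap, because Duplicator's reply must be at $r_H$ and the edge $r_H\to a_0$ is absent.

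For the lower bound I would invoke Lemma~\ref{lem:optimal-str} to assume Spoiler plays optimally, so his pebbling sequence $y_1,y_2,\ldots$ alternates pebbles, satisfies $y_{i+2}\ne y_i$, and always traverses an edge of the underlying graph of $G_m$. That graph is a cycle with $r_G$ attached as a pendant at $x_0$, so the only branching point is $x_0$ and part~2 of the lemma forbids the short back-and-forth $r_G,x_0,r_G$. Hence once Spoiler enters the cycle he must walk around it monotonically in some fixed orientation chosen in round~3, and can use $r_G$ again only as the single trap move. The Duplicator strategy I propose pairs $r_G$ with $r_H$, answers Spoiler's first $x_0$-move with $a_i$ where $i\equiv m\lfloor n/2\rfloor\pmod n$, and afterwards always replies with the unique cycle vertex forced by the color and edge constraints.

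A short calculation using $\gcd(m,n)=1$ then shows that at round $2+jm$ (when Spoiler's active pebble returns to $x_0$) Duplicator's active pebble lies on $a_{i-jm}$ or $a_{i+jm}$ (indices mod $n$), depending on Spoiler's orientation. Writing $t\equiv im^{-1}\pmod n$, the smallest positive $j$ bringing Duplicator to $a_0$ is $t$ in the forward case and $n-t$ in the backward case; the choice $t=\lfloor n/2\rfloor$ forces both of these to be at least $\lfloor n/2\rfloor$. Hence no matter which orientation Spoiler chooses, the trap cannot be sprung before round $3+m\lfloor n/2\rfloor\ge 3+m(n-1)/2>\tfrac12 m(n-3)$, yielding the claimed lower bound.

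The hard part will be ruling out faster Spoiler strategies that depart from this canonical ``walk the cycle, then trap'' template---for example starting at a cycle vertex rather than $r_G$, reversing orientation mid-game, or making a non-trap detour through $r_G$. Each is blocked by the rigidity of $G_m$: only $x_0$ has degree greater than two, part~2 of Lemma~\ref{lem:optimal-str} eliminates the short back-and-forth excursions that would otherwise create flexibility, and any starting move away from $r_G$ merely postpones the moment at which the root asymmetry---the only feature distinguishing $G_m$ from $H_n$ locally---can be exploited.
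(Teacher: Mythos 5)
Your upper-bound argument and the $s=0$ case of the lower bound match the paper's proof in all essentials: Spoiler is confined by Lemma~\ref{lem:optimal-str} to a monotone walk around the cycle with $r_G$ usable only as the final trap, Duplicator picks the initial residue $\pm m\lfloor n/2\rfloor \bmod n$ so that the smallest positive $\ell$ with $t\pm\ell m\equiv 0\pmod n$ is at least $\lfloor n/2\rfloor$ in either orientation, and coprimality guarantees the trap is eventually reachable. That part is fine.

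The genuine gap is the case you explicitly defer to the end: Spoiler opening on a cycle vertex $x_s$ with $s\neq 0$. Your Duplicator strategy is phrased as ``answer Spoiler's first $x_0$-move with $a_i$,'' but if the game starts at $x_s$ then by the time Spoiler first reaches $x_0$ Duplicator's pebble position is already forced by her round-one choice and her subsequent forced cycle steps, so she has no free choice to make at that moment; the strategy must instead specify her reply to $x_s$ itself, and that reply has to work against \emph{both} orientations, which Spoiler reveals only later. Your closing heuristic that such an opening ``merely postpones'' the exploitation of the root asymmetry is directionally wrong: starting at $x_s$ gives Spoiler a head start of up to $m-1$ rounds toward the configuration $(x_0,a_0)$, it does not delay him. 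The paper closes this exactly here: Duplicator answers $x_s$ with $a_{(t+s)\bmod n}$ (the shift of her $s=0$ choice), and then $D(x_0,a_t)\le s+D(x_s,a_{t+s})$ gives $D(x_s,a_{t+s})\ge D(x_0,a_t)-(m-1)>\frac12 m(n-1)-(m-1)>\frac12 m(n-3)$. Note that the slack between $(n-1)$ and $(n-3)$ in the stated bound exists precisely to absorb this $m-1$ head start; your computation never spends it, which is a sign the $s\neq 0$ case was not actually carried out. With that shift-and-subtract step added, your proof is complete and coincides with the paper's.
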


\begin{proof}
  Let $V(G_m)=\{x_0,\ldots,x_{m-1}\}$ and $V(H_n)=\{a_0,\ldots,a_{n-1}\}$.
Assume that $x_0$ is adjacent to the root of $G_m$,
$a_0$ is non-adjacent to the root of $H_n$, and the indices increase
in the direction of arrows.
We first argue that Spoiler has a winning strategy in the existential 2-pebble
game on $G_m$ and $H_n$. Let Spoiler pebble $x_0$ in the first round
and assume that Duplicator responds with $a_t$. If $t=0$, Spoiler wins
by putting the other pebble on the root. If $t>0$, Spoiler is able to force
pebbling the pair $(x_0,a_0)$ in a number of rounds. Indeed, if Spoiler 
moves the pebbles alternatingly along the cycle so that the pebbled vertices are always
adjacent, then after $\ell m$ rounds Spoiler passes the cycle $\ell$ times
and arrives again at $x_0$, while Duplicator is forced to come to $a_{t+\ell m}$,
where the index is computed modulo $n$. Since $m$ and $n$ are coprime,
$m\bmod n$ is a generator of the cyclic group $\bbz_n$. It follows that
the parameter $\ell$ can be chosen so that
 $t+\ell m=0\pmod n$,
and then $a_{t+\ell m}=a_0$.

We now have to show that Duplicator is able to stand up in at least $\frac12\,m(n-3)$
rounds. Estimating the length of the game, we can assume that Spoiler plays
according to an optimal strategy. It readily follows by Lemma \ref{lem:optimal-str}
that Spoiler begins playing in a non-root vertex $x_s$ and forces pebbling the pair $(x_0,a_0)$
as explained above, by moving along the cycle always in the same direction.
Let $D(x_s,a_t)$ denote the minimum number of moves needed for Spoiler to reach
this configuration if Duplicator's move in the first round is~$a_t$.

Suppose first that $s=0$ and
also that Spoiler moves in the direction of arrows. Then he can force
pebbling $(x_0,a_0)$ only in $\ell m$ moves with $\ell$ satisfying $t+\ell m=0\pmod n$.
Denote $l=\lfloor n/2\rfloor$ and let Duplicator choose $t=(-lm)\bmod n$.
Then the smallest possible positive value of $\ell$
is equal to $l$. If Spoiler decides to move in the opposite direction,
we have the relation $t-\ell m=0\pmod n$, which gives us
$\ell\ge\lceil n/2\rceil$. In both cases $D(x_0,a_t)\ge\frac12\,m(n-1)$.

Suppose now that $s>0$. Let Duplicator pebble $a_{t'}$ in the first round with 
$t'=(t+s)\bmod n$, where $t$ is fixed as above. Note that Spoiler 
from the position $(x_0,a_t)$ is able to force the position $(x_s,a_{t'})$
in $s$ moves. Therefore, $D(x_0,a_t)\le s+D(x_s,a_{t'})$, which implies that
$D(x_s,a_{t'})\ge D(x_0,a_t)-(m-1)>\frac12\,m(n-3)$, as claimed.
\end{proof}

\begin{theorem}
  \label{thm:cowheels}
For every pair of numbers $M\ge5$ and $N\ge5$, there is a pair of
rooted loopless digraphs $G$ and $H$ with $v(G)=M$ and $v(H)=N$
such that $D(G,H)<\infty$ and $D(G,H)\ge(\frac12-o(1))MN$.
Here the $o(1)$-term is a function of $\max(M,N)$.
\end{theorem}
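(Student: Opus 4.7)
The plan is to reduce Theorem~\ref{thm:cowheels} to Lemma~\ref{lem:cowheels} by a padding argument. Given $M,N\ge 5$, the first step is to find coprime integers $m\le M-1$ and $n\le N-1$ with $mn=(1-o(1))MN$ as $\max(M,N)\to\infty$. This is achievable by a routine number-theoretic observation: for instance, by known prime gap bounds one can pick $m$ and $n$ to be two distinct primes lying in the intervals $[M-o(M),M-1]$ and $[N-o(N),N-1]$, which are automatically coprime. For small $M$ and $N$ the claimed bound $(\tfrac12-o(1))MN$ is vacuous, so no further argument is needed there.

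Given such $m$ and $n$, I would form $G$ by taking the co-wheel $G_m$ of Lemma~\ref{lem:cowheels} and attaching $M-1-m$ fresh pendant vertices to the root, each connected by a single arrow from the root and with no other incident edges. The graph $H$ is obtained analogously from $H_n$ by attaching $N-1-n$ pendants to its root. Both $G$ and $H$ are rooted loopless digraphs in which every vertex is reachable from the root, and $v(G)=M$, $v(H)=N$ by construction.

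To see that $D(G,H)<\infty$, Spoiler plays his winning strategy from Lemma~\ref{lem:cowheels} unchanged, ignoring the pendants. After his first move on a cycle vertex $x_0\in V(G_m)$, Duplicator is in fact forced onto the cycle of $H_n$: pebbling a pendant of $H$ would cost her the next round, since pendants have no outgoing arrows while $x_0$ is the source of a cycle edge. From there Spoiler proceeds exactly as in Lemma~\ref{lem:cowheels} to force the losing configuration $(x_0,a_0)$. For the lower bound $D(G,H)\ge(\tfrac12-o(1))MN$, I would equip Duplicator with the following adapted strategy: on moves Spoiler makes on a vertex of the original $G_m$, she responds by her strategy from Lemma~\ref{lem:cowheels}; on moves Spoiler makes on a pendant of $G$, she responds with any pendant of $H$. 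Since pendants on either side have identical local structure---one incoming arrow from the root and nothing else---such a response imposes no binary constraint with the other pebble and therefore never violates the partial homomorphism. Pendant excursions only delay Spoiler, so the number of rounds he needs to force $(x_0,a_0)$ is at least the bound $\tfrac12 m(n-3)=(\tfrac12-o(1))MN$ from Lemma~\ref{lem:cowheels}.

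The main technical point---and effectively the only step requiring verification---is that Duplicator's adapted strategy maintains a partial homomorphism through every mixed configuration in which one pebble sits on a pendant and the other on a cycle or root vertex. This reduces to a short case analysis over Spoiler's next move (to another pendant, to a cycle vertex, or to the root), and in each case the absence of outgoing arrows at pendants makes the required edge check immediate. The padding costs at most $M-1-m=o(M)$ and $N-1-n=o(N)$ extra vertices, which is absorbed into the $o(1)$-term and yields the claimed bound.
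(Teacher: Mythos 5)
Your proposal is correct and follows essentially the same route as the paper: pad the co-wheels of Lemma~\ref{lem:cowheels} with pendant vertices hanging off the roots, and use prime-gap bounds to choose coprime cycle lengths close to $M-1$ and $N-1$. The only differences are cosmetic --- the paper perturbs at most one of the two parameters via a short case analysis rather than replacing both by nearby primes, and it simply asserts (as you argue in slightly more detail) that the added pendants are useless to both players.
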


\begin{proof}
  Given co-wheels $G_m$ and $H_n$, add $k$ new vertices to $G_n$
and $l$ new vertices to $H_n$
(and arrows to these vertices from the roots)
and denote the resulting rooted digraphs by $G_n^k$ by $H_n^l$.
Since the new vertices are useless for both Spoiler and Duplicator, we have 
$D(G_m^k,H_n^l)=D(G_m,H_n)$ for any $k,l\ge0$.

Denote $m=M-1$ and $n=N-1$. If $m$ and $n$ are coprime, then
we can take $G=G_m$, $H=H_n$, and Lemma \ref{lem:cowheels} does the job.
Consider now the case that $m$ and $n$ are not coprime.
If $m$ is a prime divisor of $n$, then
$m$ and $n-1$ are coprime, and we can take $G=G_m$ and $H=H_{n-1}^1$ in this case.
The case that $n$ is prime is similar.
If none of $m$ and $n$ is prime, let $p<n$ be the prime closest to $n$.
By \cite{Baker.2001}, we have $p>n-n^{0.525}$ for a large enough $n$. 
Assume first that $p$ does not divide $m$. Since
these two numbers are coprime, we can take
$G=G_m$ and $H=H_p^{n-p}$ getting
$$
D(G,H)=D(G_m,H_p)>\frac12\,m(p-3)>\frac12\,m(n-n^{0.525}-3).
$$
If $p$ divides $m$, the numbers $m-1$ and $p$ are coprime, and we take
$G=G_{m-1}^1$ and $H=H_p^{n-p}$.
\end{proof}

\begin{figure}
\centering
 \begin{tikzpicture}[scale=0.5]
  \FPset{\r}{2}
\begin{scope}
  \FPset{\p}{1.464}
  \node[rvertex] (r) at (0,0) {};
  \foreach \i in {0,...,3}
   {
    \node[avertex] (x\i) at (\i*90:\r cm) {};
    \node[bvertex] (y\i) at (\i*90+30:\p cm) {};
    \node[cvertex] (z\i) at (\i*90+60:\p cm) {};
    \FPsub{\j}{\i}{1}
    \FPround{\j}{\j}{0}
    \ifnumgreater{\i}{0} 
{\draw (x\j) -- (y\j) -- (z\j) -- (x\i);}{}
    }
    \draw (x3) -- (y3) -- (z3) -- (x0);
    \draw (r) -- (x0);
\end{scope}
 
\begin{scope}[xshift=90mm]
 \FPset{\q}{1.654}
   \node[rvertex] (r) at (0,0) {};
  \foreach \i in {0,...,4}
   {
    \node[avertex] (x\i) at (\i*72:\r cm) {};
    \node[bvertex] (y\i) at (\i*72+24:\q cm) {};
    \node[cvertex] (z\i) at (\i*72+48:\q cm) {};

    \FPsub{\j}{\i}{1}
    \FPround{\j}{\j}{0}
    \ifnumgreater{\i}{0} 
{\draw (x\j) -- (y\j) -- (z\j) -- (x\i) -- (r);}{}
    }
    \draw (x4) -- (y4) -- (z4) -- (x0);
\end{scope}

 \end{tikzpicture}
\caption{\COWHEELS\  as colored graphs.}
\label{fig:cowheels-as}
\end{figure}

Using a simple gadget, in the \COWHEELS\  pattern we can make edges undirected
simulating directions by vertex colors.
In this way, we can construct examples of pairs with large $D(G,H)$
also for colored graphs;
see Fig.~\ref{fig:cowheels-as}.

\begin{corollary}\label{cor:colgraphs-dags}
  Theorem \ref{thm:cowheels} holds true also for colored graphs
with bound $D(G,H)\ge(\frac16-o(1))MN$.
\end{corollary}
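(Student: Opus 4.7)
The plan is to convert the rooted loopless digraphs of Theorem \ref{thm:cowheels} into colored graphs via the subdivision gadget pictured in Figure \ref{fig:cowheels-as}. Given co-wheels $G_m,H_n$, I keep the root red and every cycle vertex apricot, replace each directed cycle-arrow $u\to v$ by an undirected path $u\mbox{--}y\mbox{--}z\mbox{--}v$ with $y$ in a new blue color and $z$ in a new cyan color, and keep each root-to-cycle arrow as a plain undirected edge (no subdivision is needed there because the unique root color already breaks the symmetry of that edge). The resulting colored graphs $\tilde G_m,\tilde H_n$ have $3m+1$ and $3n+1$ vertices, and the padding trick $G^k_m,H^l_n$ from the proof of Theorem \ref{thm:cowheels} (extra apricot vertices attached to the root) extends with no change and lets one match any prescribed sizes $M,N$.

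The heart of the argument is to prove $D(\tilde G_m,\tilde H_n)\ge\frac{3}{2}\,m(n-3)$, essentially transferring Lemma \ref{lem:cowheels} with a factor-of-three slowdown. The color sequence apricot--blue--cyan--apricot along each subdivided segment is rigid: whenever Spoiler's apricot pebble sits on $x_i$ and Duplicator has replied with $a_t$, the three subsequent moves traversing the segment from $x_i$ to $x_{i+1}$ force Duplicator's unique color-respecting replies to trace $b_t,c_t,a_{t+1}$, so three consecutive rounds advance the apricot pebble pair from $(x_i,a_t)$ to $(x_{i+1},a_{t+1})$ in the same cyclic direction. Under this rigidity the analysis of Lemma \ref{lem:cowheels} carries over verbatim under the substitution ``one cycle step equals three rounds''; in particular, the index arithmetic still happens on the apricot sub-cycles of lengths $m$ and $n$, so coprimality of $m,n$ (not of $3m,3n$) suffices for Spoiler to force $(x_0,a_0)$ and then win by pebbling the root. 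Optimality of Spoiler's play (Lemma \ref{lem:optimal-str}) rules out deviations: parts~2 and~3 pin Spoiler to a non-backtracking walk in the underlying graph, and any detour to the root is either safely imitated by Duplicator or strictly slower than the cycle strategy.

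Substituting $M=3m+1$ and $N=3n+1$ into $\frac{3}{2}\,m(n-3)$ yields $(\frac{1}{6}-o(1))\,MN$, while the coprimality of $m,n$ is arranged by the same prime-gap argument from \cite{Baker.2001} used in the proof of Theorem \ref{thm:cowheels}. The main obstacle I expect is a clean verification of the last point above: one has to show that every root-involving move Spoiler might attempt is either imitable by Duplicator in $\tilde H_n$ (since the root-structures of $\tilde G_m$ and $\tilde H_n$ are compatible except at $x_0/a_0$) or only becomes decisive once the critical configuration $(x_0,a_0)$ has already been forced on the cycle, so that no rounds are ever saved by leaving the cycle.
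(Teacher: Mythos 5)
Your proposal is correct and matches the paper's intended argument exactly: the paper gives no explicit proof of Corollary \ref{cor:colgraphs-dags} beyond pointing to the gadget of Fig.~\ref{fig:cowheels-as}, and your subdivision (three colored vertices per cycle arrow, plain root edges), the resulting factor-$3$ blowup in both vertex count and game length, and the arithmetic $\frac{3}{2}mn/(9mn)=\frac16$ are precisely what is intended. Your treatment of the color-rigidity and root-detour issues is in fact more explicit than anything in the paper.
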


Corollary \ref{cor:colgraphs-dags} can be
obtained also from the \DOMINO\  pattern, though with a smaller
factor $\frac18-o(1)$; see Fig.~\ref{fig:domino-colgraphs}. It is worth noting that
$G$ will be a unicyclic graph while $H$
will be a tree
 (more exactly, $H$ will be a caterpillar
and can be made even a path at he cost of further decreasing
the constant factor to $\frac1{10}-o(1)$).
Note that this result
is best possible in the sense that, by Corollary \ref{cor:A-tree},
$G$ cannot be a acyclic.

\begin{corollary}
  \label{cor:cycle-tree}
For every $M\ge2$ there is a unicyclic colored graph $G_M$ with $M$ vertices
 and for every $N\ge1$ there is a tree $H_N$ with $N$ vertices
such that $D(G_M,H_N)<\infty$ and $D(G_M,H_N)>\frac18\,(M-1)(N-5)$.
\end{corollary}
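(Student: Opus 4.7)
The plan is to derive Corollary~\ref{cor:cycle-tree} from the \DOMINO\ example of Section~\ref{sec:prelim} by a colored-graph reduction. Specifically, I would replace each directed colored arrow of $A_m$ and of $B_n$ by a short path whose internal vertices carry colors encoding both the original edge color and the direction of the arrow, and replace each red loop of $B_n$ by a short pendant path attached at the corresponding vertex. With a careful gadget choice, the cycle $A_m$ subdivides into a unicyclic colored graph $G_M$ on $M=\Theta(m)$ vertices, and the path-with-loops $B_n$ becomes a caterpillar-like tree $H_N$ on $N=\Theta(n)$ vertices.

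To show $D(G_M,H_N)<\infty$, I would lift Spoiler's \DOMINO-winning strategy to the subdivided game. He walks both pebbles monotonically around the subdivided cycle, alternating them in accordance with Lemma~\ref{lem:optimal-str}. The direction-encoding colors on the blue subdivision force Duplicator's pebbles to advance along the blue spine of $H_N$ whenever Spoiler crosses that subdivision, until Duplicator is eventually pushed off the end of the spine and loses.

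For the lower bound $D(G_M,H_N)\ge\tfrac18(M-1)(N-5)$, I would describe Duplicator's strategy tracking her \DOMINO\ strategy through the subdivision. She maintains a ``home'' vertex $b_t$ on the spine of $H_N$, uses the pendant at $b_t$ as her counterpart of the \DOMINO\ red loop during each red crossing, and advances her home by exactly one spine step on each blue crossing. By Lemma~\ref{lem:optimal-str}, Spoiler's optimal play walks monotonically around the subdivided cycle without backtracking, so each advance of Duplicator's home requires a full loop of $\Theta(m)$ rounds. Starting her at the midpoint of the spine, Duplicator survives $\Omega(mn)$ rounds; translating this via $M=\Theta(m)$ and $N=\Theta(n)$ and accounting for constants yields the stated $\tfrac18$ bound after routine arithmetic.

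The main obstacle is the correct design of the pendant gadget. In an irreflexive simple colored graph there is no literal analogue of a loop, so the pendant must be colored so that Duplicator can repeatedly ``use'' it across the $m-1$ red crossings of a single trip around $G_M$ without being trapped at its leaf, and without Spoiler being able to exploit the pendant's asymmetry. I expect that a constant-size pendant with a suitably chosen color palette---paired with the monotone-walk restriction on Spoiler coming from Lemma~\ref{lem:optimal-str}---works, but verifying this requires a meticulous case analysis for all pebble configurations that can arise at the transitions between red and blue crossings. Once this gadget is in place, the remainder of the argument is essentially the \DOMINO\ analysis of Section~\ref{sec:prelim} with a constant multiplicative blow-up for the subdivision, and the corollary follows.
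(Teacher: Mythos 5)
Your proposal matches the paper's own route: Corollary~\ref{cor:cycle-tree} is obtained there exactly by subdividing the \DOMINO{} arrows into short paths whose internal vertex colors encode color and direction and by turning each red loop of $B_n$ into a pendant red vertex (Fig.~\ref{fig:domino-colgraphs}), which yields a unicyclic $G_M$ and a caterpillar $H_N$, with the $\tfrac18$ constant coming from the same round-count translation you describe. The one ``obstacle'' you flag is in fact immediate: a single red pendant vertex suffices, because the red arrows of $A_m$ likewise become undirected length-two paths through a red vertex (losing their orientation only helps Duplicator), so on each red crossing she bounces to the pendant and back, and she cannot be trapped at the leaf since its unique neighbor is its spine vertex.
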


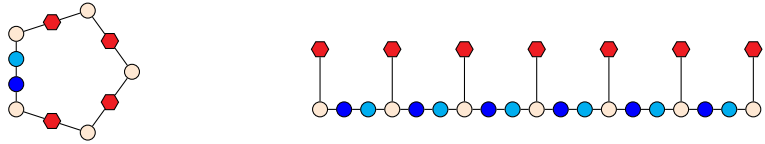
\begin{figure}
\centering
 \begin{tikzpicture}
  \FPset{\r}{2}
\begin{scope}[scale=0.5]
  \path (0,\r) node[avertex] (x0)   {}
       ++(234:1cm) node[schvertex] (y0)   {} edge (x0)
       ++(234:1cm) node[avertex] (x1)   {} edge (y0)
       ++(162:1cm) node[schvertex] (y1)   {} edge (x1)
       ++(162:1cm) node[avertex] (x2)   {} edge (y1)
       ++(90:6.7mm) node[bvertex] (y2)   {} edge (x2)
       ++(90:6.7mm) node[cvertex] (z2)   {} edge (y2)
       ++(90:6.7mm) node[avertex] (x3)   {} edge (z2)
       ++(18:1cm) node[schvertex] (y3)   {} edge (x3)
       ++(18:1cm) node[avertex] (x4)   {} edge (y3)
       ++(306:1cm) node[schvertex] (y4)   {} edge (x4) edge (x0);
\end{scope}

\begin{scope}[xshift=25mm,yshift=5mm,scale=0.8]
  \foreach \i in {0,...,6}
   {
    \node[avertex] (u\i) at (\i*12 mm, 0 cm) {};
    \node[schvertex] (s\i) at (\i*12 mm, 1 cm) {};
    \draw (u\i) -- (s\i);
    \FPsub{\j}{\i}{1}
    \FPround{\j}{\j}{0}
    \ifnumgreater{\i}{0} 
      {
    \node[bvertex] (v\j) at (\j*12 mm + 4 mm, 0 cm) {};
    \node[cvertex] (w\j) at (\j*12 mm + 8 mm, 0 cm) {};
    \draw (u\j) -- (v\j) -- (w\j) -- (u\i);
      }{}
    }
 \end{scope}
 \end{tikzpicture}
\caption{The colored graphs obtained from the \DOMINO\  example in Fig.~\protect\ref{fig:domino}.}
\label{fig:domino-colgraphs}
\end{figure}

\begin{remark}\rm
Feder and Vardi \cite{Feder.1998} showed that a general CSP is equivalent to
the homomorphism problem restricted to directed acyclic graphs (\emph{dag}s).
In view of this result, it is natural that Corollary \ref{cor:colgraphs-dags}
is true also for uncolored dags. Indeed, the directed cycles in
\COWHEELS\ can be broken by subdividing each arrow in the cycle into three
arrows oriented in different directions. The root nodes can be designated by
attaching additional arrows; see Fig.~\ref{fig:cowheels-dags}.
In fact, any distinguishable uncolored digraphs $G$ and $H$ with large $D(G,H)$ must be acyclic: 
The existence of a directed cycle or a loop in $G$ or $H$
implies that either $D(G,H)=\infty$ or $D(G,H)\le v(H)+1$. 
\comm{%
Let $G$ and $H$ be two (uncolored) digraphs. If $H$ has a loop, Duplicator
wins just by staying in it all the time.
Assume $H$ has no loop.
If $G$ has a loop, Spoiler wins in two moves by putting
both pebbles in the loop.
Assume now that both $G$ and $H$ are loopless.
If $H$ has a (directed) cycle, Duplicator wins by walking in this cycle
in one or the other direction.
Assume $H$ has no cycle.
If $G$ has a cycle, Spoiler wins, roughly, in at most $v(H)+1$ moves by
walking around the cycle always in the same direction.}
\end{remark}

\begin{figure}
\centering
\begin{tikzpicture}[>=stealth,scale=0.7]
 \FPset{\r}{2}
\begin{scope}[yshift=-50mm]
  \FPset{\p}{1.464}
  \node[uvertex] (r) at (0,0) {};
  \foreach \i in {0,...,3}
   {
    \node[uvertex] (x\i) at (\i*90:\r cm) {};
    \node[uvertex] (y\i) at (\i*90+30:\p cm) {};
    \node[uvertex] (z\i) at (\i*90+60:\p cm) {};
    \FPsub{\j}{\i}{1}
    \FPround{\j}{\j}{0}
    \ifnumgreater{\i}{0} 
{\draw [a] (x\j) -- (y\j); \draw [aa] (y\j) -- (z\j); \draw [aaa] (z\j) -- (x\i);}{}
    }
    \draw [a] (x3) -- (y3); \draw [aa] (y3) -- (z3); \draw [aaa] (z3) -- (x0);
    \draw[->] (r) -- (x0);
    \path (-.5,0) node[uvertex] (r1)   {} edge[<-] (r)
           (-1,0) node[uvertex] (r2)   {} edge[<-] (r1)
         (-1.5,0) node[uvertex] (r3)   {} edge[<-] (r2);
\end{scope}
 
\begin{scope}[xshift=70mm,yshift=-50mm]
 \FPset{\q}{1.654}
   \node[uvertex] (r) at (0,0) {};
  \foreach \i in {0,...,4}
   {
    \node[uvertex] (x\i) at (\i*72:\r cm) {};
    \node[uvertex] (y\i) at (\i*72+24:\q cm) {};
    \node[uvertex] (z\i) at (\i*72+48:\q cm) {};

    \FPsub{\j}{\i}{1}
    \FPround{\j}{\j}{0}
    \ifnumgreater{\i}{0} 
{\draw [a] (x\j) -- (y\j); \draw [aa] (y\j) -- (z\j); \draw [aaa] (z\j) -- (x\i); \draw[<-] (x\i) -- (r);
}{}
    }
    \draw [a] (x4) -- (y4); \draw [aa] (y4) -- (z4); \draw [aaa] (z4) -- (x0);
    \path (.5,0) node[uvertex] (r1)   {} edge[<-] (r)
           (1,0) node[uvertex] (r2)   {} edge[<-] (r1)
         (1.5,0) node[uvertex] (r3)   {} edge[<-] (r2);
\end{scope}
 \end{tikzpicture}
\caption{\COWHEELS\  as dags.}
\label{fig:cowheels-dags}
\end{figure}

\section{Winner proof systems}
\label{sec:proofs}

Inspired by \cite{Atserias.2004},
we now introduce a notion that allows us to define
a few useful parameters measuring the speed of constraint propagation.
In the next section it will serve as a link
between the length of the existential 2-pebble game on $(A,B)$
and the running time of an AC algorithm on input~$(A,B)$.

Let $G$ and $H$ be connected colored graphs, both with at least 2 vertices. 
A \emph{proof system of Spoiler's win on $(G,H)$} consists of
\emph{axioms}, that are pairs $(y,b)\in V(G)\times V(H)$
with $y$ and $b$ colored differently,
and derivations of pairs $(x,a)\in V(G)\times V(H)$ and a special symbol $\bot$
by the following \emph{rules}:
\begin{itemize}
\item 
$(x,a)$ is derivable from a set $\{(y,b_1),\ldots,(y,b_s)\}$ such that $y\in N(x)$
and $\{b_1,\ldots,b_s\}=N(a)$;
\item 
$\bot$ is derivable from a set $\{y\}\times V(H)$.
\end{itemize}
A \emph{proof} is a sequence $P=p_1,\ldots,p_{\ell+1}$ such that
if $i\le\ell$, then $p_i\in V(G)\times V(H)$ and it
is either an axiom or is derived from a set $\{p_{i_1},\ldots,p_{i_s}\}$
of preceding pairs $p_{i_j}$; also,
$p_{\ell+1}=\bot$ is derived from a set of preceding elements of $P$.
More precisely, we regard $P$ as a dag on $\ell+1$ nodes where a derived $p_i$ sends arrows
to each $p_{i_j}$ used in its derivation.
Moreover, we always assume that
$P$ contains a directed path from $\bot$ to each node, that is, every element of $P$
is used while deriving~$\bot$.

We define the \emph{length} and the \emph{size} of the proof $P$ as
$\len P=v(P)-1$ and $\s P=e(P)$ respectively. 
Note that $\len P$ is equal to $\ell$,
the total number of axioms and intermediate derivations in the proof.
Since it is supposed that the underlying graph of $P$ is connected, 
we have $\len P\le\s P$, where equality
is true exactly when $P$ is a tree.
The \emph{depth} of $P$ will be denoted by $\dd P$ and defined
to be the length of a longest directed path in $P$.
Obviously, $\dd P\le\len P$.

It is easy to show that a proof $P$ exists iff $D(G,H)<\infty$ 
(cf.\ part 1 of Theorem \ref{thm:proofs} below).
Given such $G$ and $H$,
define the \emph{(proof) depth} of $(G,H)$ to be the minimum
depth of a proof for Spoiler's win on $(G,H)$.
The \emph{(proof) length} and the \emph{(proof) size} of $(G,H)$
are defined similarly. We denote the three parameters by
$\dd{G,H}$, $\len{G,H}$, and $\s{G,H}$, respectively.
Note that
$
\dd{G,H}\le\len{G,H}\le\s{G,H}
$.

\begin{theorem}\label{thm:proofs}
Let $G$ and $H$ be connected colored graphs, both with at least 2 verices, 
such that $D(G,H)<\infty$.
  \begin{enumerate}
  \item 
$\dd{G,H}=D(G,H)$.
\item 
$\dd{G,H}\le\len{G,H}\le v(G)v(H)$ and this is tight up to a constant factor:
for every pair of integers $M,N\ge2$ there is a pair of colored graphs $G,H$ 
with $v(G)=M$ and $v(H)=N$ such that
$\dd{G,H}\ge(\frac1{6}-o(1))\,MN$.
\item 
$\s{G,H}<2\,v(G)e(H)+v(H)$.
\item 
For every $N$ there is a pair of colored graphs $G_N$ and $H_N$ both with $N$ vertices
such that  $\s{G_N,H_N}>\frac1{128}\,N^3$ for all large enough~$N$.
\comm{The best we can do is $\s{G_N,H_N}>(\frac2{243}-o(1))\,N^3$.}
  \end{enumerate}
\end{theorem}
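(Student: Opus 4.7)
The plan is to prove the four parts of Theorem~\ref{thm:proofs} in sequence.

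For part 1, I would establish $\dd{G,H}=D(G,H)$ via a correspondence between proofs of Spoiler's win and Spoiler's winning strategies: a derivation of $(x,a)$ from $\{(y,b_1),\ldots,(y,b_s)\}$ with $\{b_1,\ldots,b_s\}=N(a)$ encodes the move ``Spoiler pebbles $y\in N(x)$, and for every legal Duplicator response $b\in N(a)$ the subproof rooted at $(y,b)$ continues the play.'' Axioms $(y,b)$ of mismatched colours correspond to positions where Duplicator has already lost, and the derivation of $\bot$ from $\{y\}\times V(H)$ encodes Spoiler's first move against an arbitrary Duplicator response. Under this correspondence every maximal directed path in the DAG matches a complete play of the game, so the minimum proof depth equals $D(G,H)$; the two directions are established by interpreting an optimal Spoiler strategy as a proof and vice versa.

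For parts 2 and 3, the bound $\len{G,H}\le v(G)v(H)$ is immediate because in a length-minimal proof each element of $V(G)\times V(H)$ appears as a node at most once; the matching lower bound on $\dd{G,H}$ is inherited from part 1 together with Corollary~\ref{cor:colgraphs-dags}. Starting from such a length-minimal proof and counting edges, the $\bot$-node contributes $v(H)$ incoming edges while each derived pair $(x,a)$ contributes $\deg_H(a)$; since each $a\in V(H)$ appears in at most $v(G)$ derived pairs, the total is at most $v(H)+v(G)\sum_{a\in V(H)}\deg_H(a)=v(H)+2v(G)e(H)$.

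For part 4, I would take $G_N$ to be a cowheel $G_m$ with $m=\Theta(N)$ and $H_N$ a blow-up of a small cowheel $H_n$ (with $n$ a constant coprime to $m$) in which each non-root vertex is replaced by a blob of $k=\Theta(N)$ copies with complete bipartite edges between adjacent blobs. Then $v(H_N)=\Theta(N)$, every non-root vertex of $H_N$ has degree $2k=\Theta(N)$, and the blob structure preserves the cyclic game so that $D(G_N,H_N)=\Theta(mn)=\Theta(N)$. The key point is that the derivation rule forces every $(y,b)$ with $b\in N(a)$ to appear in the DAG whenever $(x,a)$ does; an induction on depth then shows that at each of the $\Theta(N)$ levels of Spoiler's strategy the proof contains all $nk=\Theta(N)$ pairs of the form $(x_\ell,a_t^{(j)})$, yielding $\Theta(N^2)$ distinct derived pairs each with in-degree $\Theta(N)$, hence $\s{G_N,H_N}=\Theta(N^3)$. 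The main obstacle is pinning down the precise constant $1/128$; this requires optimising $m,n,k$ subject to $v(G_N)+v(H_N)\le N$ and carefully handling border effects near the root and the missing root-arrow, which I expect to resolve by a direct calculation once $n$ is fixed to a small value coprime to $m$.
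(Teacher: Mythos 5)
Parts 1--3 of your proposal match the paper's proof essentially step for step: the induction establishing the correspondence between depth-$r$ derivations of $(x,a)$ and $r$-round winning strategies from position $(x,a)$ for part 1, the normalization to proofs in which every pair occurs at most once for part 2 (with the lower bound imported from Corollary~\ref{cor:colgraphs-dags} via part 1), and the edge count $\sum_{(x,a)}\deg a+v(H)$ for part 3. The only cosmetic difference is that the paper's bound in part 3 is strict, which it gets by observing that at least one axiom node has out-degree $0$; your ``at most'' gives only the non-strict inequality as written.

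Part 4 is where you genuinely diverge, and where there is a gap. The paper uses co-wheels with \emph{both} parameters of order $N$ (namely $m=n-1$), so that $\dd{G,H}=D(G,H)=\Theta(N^2)$ already, and then adds $\Theta(N)$ universal vertices of a fresh color to each graph to force $\delta(H)=\Theta(N)$; the cubic bound then falls out of the one-line inequality $\s{G,H}\ge\dd{G,H}\,\delta(H)$ (every proof contains a directed path of length at least $D(G,H)$, and each derived node on it emits at least $\delta(H)$ arrows). In your construction $n$ is a constant, so $D(G_N,H_N)=\Theta(N)$ only, and this inequality yields merely $\Theta(N^2)$; you therefore need the much stronger claim that \emph{every} proof contains $\Omega(N^2)$ \emph{distinct} derived pairs, and the justification offered (``an induction on depth shows that at each level the proof contains all $nk$ pairs'') does not establish it. Two concrete problems: (i) a proof is an adversarial object that need not follow Spoiler's optimal strategy, so ``levels of Spoiler's strategy'' is not a structure available for the induction; (ii) multiplying ``levels'' by ``pairs per level'' counts pair-occurrences, not distinct pairs --- a directed path of length $\Theta(N)$ in the DAG may a priori revisit the same class $(x_\ell,\mathrm{blob}_t)$ up to $k$ times through different copies $a_t^{(j)}$, so it guarantees only $\Theta(N)$ distinct pairs. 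I believe the claim is true, but proving it requires an extra idea, e.g.\ projecting the proof onto the unblown-up instance $(G_m,H_n)$, using that a derivation of $(x_\ell,a_t^{(j)})$ demands \emph{all} $k$ copies in the neighboring blobs (which also rules out blob-level cycles in the projection), and then arguing that well-foundedness forces the demanded set to sweep through $\Omega(mn)$ distinct classes. A second, smaller omission: co-wheels are digraphs, and converting your blown-up $H_n$ to a colored graph by the paper's arrow-subdivision gadget would create $\Theta(nk^2)=\Theta(N^2)$ subdivision vertices, destroying $v(H_N)=\Theta(N)$; you would instead have to encode the orientation by coloring the blobs themselves (in the spirit of Fig.~\ref{fig:cowheels-as}), which your sketch does not address.
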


Note that part 3 implies that $\s{G,H}<N^3$ if both $G$ and $H$ have $N$ vertices.
Therefore, part 4 shows that the upper bound of part 3 is tight up to a constant factor.

\begin{proof}
{\bf 1.} 
  It suffices to prove that, for every $r\ge0$, Spoiler has a strategy allowing him
to win in $r$ rounds starting from the position $(x,a)$ if and only if the pair
$(x,a)$ is derivable with depth $r$. This equivalence follows by a simple inductive
argument on~$r$.

{\bf 2.}
The upper bound follows from a simple observation that any proof can be rewritten
so that every axiom used and every derived pair apears in it exactly once.
The lower bound follows by part 1 from Corollary~\ref{cor:colgraphs-dags}.

{\bf 3.} Consider a proof $P$ where each pair $(x,a)$ appears at most once. 
Since the derivation of $(x,a)$ contributes $\deg a$ arrows in $P$,
and the derivation of $\bot$ contributes $v(H)$ arrows, we have
$$
\s P<\sum_{(x,a)}\deg a+v(H)=v(G)\sum_{a}\deg a+v(H)=2\,v(G)e(H)+v(H).
$$
The inequality is strict because there must be at least one axiom node,
which has out-degree 0.

{\bf 4.} 
Note that $\s{G,H}\ge\dd{G,H}\delta(H)$, where $\delta(H)$
denotes the minimum vertex degree of $H$. Therefore, we can take
graphs $G$ and $H$ with almost the same number of vertices and with quadratic $\dd{G,H}$, 
and make $\delta(H)$ large by adding linearly many universal vertices of a new color
 to each of the graphs. A \emph{universal} vertex is adjacent to all other vertices
in the graph. If each of the graphs receives at least two new vertices,
they make no influence on the duration of the existential 2-pebble game.

More specifically, we use the co-wheels from Lemma \ref{lem:cowheels} with
coprime parameters $m=n-1$
converted to colored graphs as in Corollary \ref{cor:colgraphs-dags};
see Fig.~\ref{fig:cowheels-as}.
Thus, we have colored graphs $G$ and $H$ with $v(G)=3n-2$ and $v(H)=3n+1$
such that $D(G,H)>\frac12\,(n-1)(n-3)$. Add green universal vertices so that
the number of vertices in each graph becomes $N=\lfloor\frac92\,n\rfloor$.
For the new graphs $G_N$ and $H_N$ we still have $D(G_N,H_N)>\frac12\,(n-1)(n-3)$
while now $\delta(H_N)\ge\frac32\,n$.
\end{proof}

\begin{remark}\label{rem:depth-length}\rm
  In general, the proof depth can be much smaller than the proof length. In fact,
for every $n$ there are two colored graphs $G$ and $H$ with $v(G)=n+1$ and $v(H)=2n$ 
such that $\dd{G,H}=2$ and $\len{G,H}=n^2$. For example, let $G$ be the star $K_{1,n}$ with
all vertices colored differently. Let the central vertex be colored in red.
In order to construct $H$, begin with the complete bipartite graph $K_{n,n}$
where one part of vertices is colored completely in red and the other part
is colored as the set of leaves in $G$. To obtain $G$, we remove a matching
($n$ pairwise non-adjacent edges) from this graph. Here we use $n+1$ colors.
This number can be made fixed similarly to \cite[Section III.F]{Berkholz.2012}.
\end{remark}

\section{Time complexity of Arc Consistency}\label{sec:time}

\subsection{Reduction to colored graphs}
\label{sec:reduction}

In this subsection we justify our focusing on colored graphs by showing
 a linear time reduction from the \GAME{} to its restriction on colored simple connected graphs 
(that also preserves the parameter~$D(A,B)$). 
The \emph{size} of a binary structure $A$ with binary relations $E_1,E_2,\ldots$
and unary relations $U_1,U_2,\ldots$ is defined to be $\e A=\sum_i|E_i^A|+\sum_j|U_j^A|$.

\begin{lemma}\label{lem:reduction}
There is a linear time reduction that takes two relational structures $A$ and $B$ with arbitrary unary and binary relations and computes two colored simple connected graphs $G$ and $H$ such that 
\begin{itemize}
\item $A$ and $B$ pass the arc consistency test iff so do $G$ and $H$,
\item $D(G,H) = \Theta(D(A,B))$,
\item $v(G) = O(\|A\|)$, $e(G)=O(\|A\|)$,
\item $v(H) = O(\|B\|)$, $e(H)=O(\|B\|)$.
\end{itemize}
\end{lemma}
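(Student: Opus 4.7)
The plan is a standard gadget-based reduction that encodes each unary relation as a vertex color and each binary relation by a short path-gadget whose asymmetrically colored interior vertices record the direction. I would introduce fresh colors: a color $\gamma_0$ marking ``original'' vertices, a pair $\gamma_i^{\mathrm{src}}, \gamma_i^{\mathrm{tgt}}$ for each binary relation $E_i$, and a color $\rho$ for a new root vertex used to enforce connectedness. Then $G$ contains, for each $x\in V(A)$, a copy $\tilde x$ carrying color $\gamma_0$ together with the inherited unary colors; for each tuple $(x,y)\in E_i^A$, a path $\tilde x - p^{(x,y)}_1 - p^{(x,y)}_2 - \tilde y$ whose two interior vertices receive colors $\gamma_i^{\mathrm{src}}$ and $\gamma_i^{\mathrm{tgt}}$ respectively; and a fresh root $r_G$ of color $\rho$ connected by a uniform one-edge gadget to one chosen representative of every connected component of the underlying graph $G_A$. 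The structure $B$ is processed identically to produce $H$ using the same color alphabet, so that any color-preserving move constrains interior pebbles sharply. The linear size bounds $v(G), e(G)=O(\|A\|)$ (and likewise for $H$) follow immediately, since each tuple and each component contributes only constantly many vertices and edges.

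For the arc consistency equivalence, given AC-domains $\{D_x\}_{x\in V(A)}$ on $A, B$, I would define $\tilde D_{\tilde x} = \{\tilde a : a\in D_x\}$ on original vertices of $G$ and the domain on each interior gadget vertex as the set of its color-compatible counterparts in $H$ supported by the endpoint domains; a direct case analysis, one case per gadget shape, shows this is arc consistent on $G, H$. Conversely, restricting any AC-domain assignment on $G, H$ to original vertices yields an AC-domain assignment on $A, B$, because iterating the arc-consistency rule twice through the interior of an $E_i$-gadget reproduces exactly the original arc-consistency rule for the relation $E_i$. The root vertex $r_G$ interacts only with $r_H$ (its unique color-compatible image in $H$), so adding it contributes no new propagation once both sides pass node consistency.

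For the bound $D(G, H) = \Theta(D(A, B))$, I would simulate strategies in both directions. An optimal Spoiler strategy on $A, B$ lifts to $G, H$ by walking each edge-move across the corresponding gadget path in three single-pebble moves, with Duplicator's responses on interior vertices essentially forced by the distinctness of interior colors, so the game length inflates by at most a constant factor. Conversely, by Lemma~\ref{lem:optimal-str} Spoiler's optimal play on $G, H$ moves pebbles only along edges of $G$, and every maximal run through interior gadget vertices either terminates at an original vertex (contributing one step to a contracted strategy on $A, B$) or else forces Duplicator to lose within a bounded number of further rounds. The main technical point I expect to require care is this contraction direction: one must check that Duplicator cannot stall significantly by exploiting spurious choices at interior gadget vertices that have no counterpart in the $A, B$-game, and that the root phase -- which, by construction, lets Spoiler pick a component of $A$ and Duplicator a component of $B$ -- faithfully reproduces the identity $D(A,B)=\min_i\max_j D(A_i,B_j)$ rather than offering Spoiler an unintended shortcut. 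Both concerns are handled by the mutual distinctness of the interior gadget colors, which pins down Duplicator's interior responses, and by the symmetry of the root gadget between $G$ and $H$.
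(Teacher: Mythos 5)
Your core gadget --- replacing each tuple $(x,y)\in E_i$ by a three-edge path whose two interior vertices carry distinct, relation-specific ``source'' and ``target'' colors --- is exactly the paper's construction, and your size bounds and the back-and-forth simulation of the game across gadgets are the right ingredients. The problems lie in your two auxiliary gadgets. The more serious one is the connectivity gadget: you attach the root $r_G$ to \emph{one chosen representative per connected component}, whereas the paper adds a universal vertex adjacent to \emph{all} other vertices. Your version is not sound. Under node consistency $D_{r_G}=\{r_H\}$, so the arc-consistency rule along the new edges forces the domain of each designated representative in $G$ to be contained in the set of designated representatives of $H$; equivalently, in the game, after Spoiler pebbles $r_G$ and then the representative of a component $A_i$, Duplicator must answer with the \emph{designated} representative of some $B_j$ rather than with a vertex of her choice inside $B_j$. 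If, say, $A_i$ and $B_j$ are both two-vertex directed paths and the chosen representative of $B_j$ is the sink while that of $A_i$ is the source, Spoiler now wins even though $D(A_i,B_j)=\infty$; so $A,B$ can pass the arc consistency test while your $G,H$ fail it. The symmetry of the root gadget, which you invoke, does not cure this --- the defect is that Duplicator's entry point into a component is pinned down. The paper's universal vertex avoids the issue entirely because every vertex of $H$ other than $r_H$ is adjacent to $r_H$, so the new edges impose no constraint.

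The second issue is your treatment of unary relations: letting $\tilde x$ carry ``$\gamma_0$ together with the inherited unary colors'' conflicts with the paper's convention that the color classes of a colored graph are pairwise disjoint, and the obvious repair of merging a vertex's set of unary relations into a single product color is also wrong, because a homomorphism need only \emph{preserve} unary relations: a variable lying in $U_1\cap U_2$ may legitimately be assigned a value lying in $U_1\cap U_2\cap U_3$, which distinct product colors would forbid. The paper's fix is to strip each unary relation off the original vertex and reattach it as a pendant neighbor carrying that single color. Both defects are local and repairable, but as written your construction does not establish the first bullet of the lemma.
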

\begin{proof}
For every binary relation $R$ we introduce two new vertex colors light-$R$ and dark-$R$ and replace 
every pair $(x,y)\in R$ in $A$ or $B$ by an undirected path $(x,r,r',y)$ where $r$ is colored 
light-$R$ and $r'$ is colored dark-$R$. 
Each triple $x,y,R$ is handled by its own pair $r,r'$.
Note that a loop $(x,x)\in R$ gives rise to a cycle $(x,r,r')$ of length~3.

We also have to ensure that the vertex colors in $G$ and $H$ are disjoint
even if the unary relations in $A$ and $B$ overlap. To this end,
for every unary relation $U$ and vertex $x\in U$ in $A$ or $B$
we remove $x$ from $U$ but create a new vertex $s\in U$ adjacent to~$x$.

In order to get the graphs connected, add a single vertex with a new color to both graphs 
and connect it with all other vertices.
\end{proof}

\subsection{An upper bound}
\label{sec:time-upper}

We now establish an upper bound of $O(v(G)e(H)+e(G)v(H))$ for the time complexity of the \GAME{}. 
One way to obtain this result is to use the linear-time reduction from arc consistency to the satisfiability problem for propositional Horn clauses (\Hornsat{}) presented in \cite{Kasif.1990}. The reduction transforms the input graphs $G$ and $H$ into a propositional Horn formula of size $v(G)e(H)+e(G)v(H)$ that is satisfiable iff arc consistency can be established on $G$ and $H$. The upper bound then follows by applying any linear time \Hornsat{} algorithm. Going a different way, we here show that the same bound can be achieved by a propagation-based algorithm, that we call \ACgraph. On the one hand, \ACgraph\ does much the same of what a linear time \Hornsat{} solver would do (after applying Kasif's reduction). On the other hand, it can be seen as a slightly accelerated version of the algorithm AC-4~\cite{Mohr.1986}.

\begin{theorem}
\ACgraph\ solves the \GAME{} in time $O(v(G)e(H)+e(G)v(H))$.
\end{theorem}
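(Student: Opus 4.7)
The plan is to exhibit the algorithm \ACgraph\ explicitly and then analyze its two phases (initialization and propagation) separately, showing that each contributes $O(v(G)e(H)+e(G)v(H))$ to the total running time. The decisive design choice, which is the ``slight acceleration'' over AC-4 alluded to in the excerpt, is to index the support counters by a single vertex $y\in V(G)$ rather than by a directed edge of $G$; indexing by edges (as in classical AC-4) would blow up the initialization cost by a factor of $\deg_G(y)$.

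Concretely, \ACgraph\ maintains domains $D_x\subseteq V(H)$ for $x\in V(G)$ and, for every pair $(y,a)\in V(G)\times V(H)$, a counter $\cnt[y,a]$ obeying the invariant $\cnt[y,a]=|N(a)\cap D_y|$, together with a queue of pairs scheduled for deletion. Initialization first sets each $D_x$ to $\{a\in V(H):a\text{ and }x\text{ share a color}\}$, then computes all counters by iterating, for each fixed $y\in V(G)$, over every $b\in D_y$ and every $a\in N(b)$; this costs $O(e(H))$ per $y$ and hence $O(v(G)e(H))$ in total. A final initialization pass scans, for every edge $\{x,y\}\in E(G)$ and every $a\in D_x$, the value $\cnt[y,a]$ and enqueues $(x,a)$ whenever it is zero, which costs $O(v(H))$ per edge of $G$ and hence $O(e(G)v(H))$.

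Propagation repeatedly dequeues a pair $(x,a)$, removes $a$ from $D_x$, decrements $\cnt[x,b]$ for every $b\in N(a)$, and, whenever some $\cnt[x,b]$ has just dropped to zero, enqueues $(y,b)$ for each $y\in N(x)$ with $b\in D_y$. I would split the running-time analysis into two amortized accounts. First, because every pair $(x,a)$ is deleted at most once, the total cost of counter decrements is bounded by $\sum_{x\in V(G)}\sum_{a\in V(H)}\deg_H(a)=2\,v(G)e(H)$. Second, every counter $\cnt[x,b]$ can drop to zero at most once, and that event triggers $O(\deg_G(x))$ enqueue operations; summing over all $(x,b)$ yields $\sum_{x\in V(G)}v(H)\deg_G(x)=2\,e(G)v(H)$. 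Correctness is routine: the counter invariant is maintained after every deletion, so when the queue empties the surviving domains satisfy the arc consistency rule and are the largest such; the answer is ``yes'' iff none of them is empty.

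Combining the initialization and propagation bounds gives the claimed $O(v(G)e(H)+e(G)v(H))$. The only conceptual obstacle is finding the right indexing of the counters, so that both halves of the bound arise naturally from the two amortized accounts; once the counter is recognised as a function of the vertex $y$ and the value $a$ only, independent of the particular neighbour $x\in N(y)$ that prompts its use, the analysis is elementary.
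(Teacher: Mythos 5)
Your proposal is essentially the paper's own algorithm and analysis: the same single-vertex indexing of the support counters (the paper justifies it by observing that the set of partial certificates $\cert_y(x,b)$ does not depend on the neighbour $y\in N(x)$), and the same two amortized accounts giving $2\,v(G)e(H)$ for counter decrements and $2\,e(G)v(H)$ for the enqueue operations triggered by counters reaching zero. One detail needs fixing in your propagation step: you delete $a$ from $D_x$ only when $(x,a)$ is \emph{dequeued}, while the guard for enqueuing $(y,b)$ is ``$b\in D_y$''; as written, a pair $(y,b)$ can therefore be enqueued twice (once for each of two neighbours $x,x'$ of $y$ whose counters for $b$ hit zero before $(y,b)$ is processed), which would break both the counter invariant and the ``each pair is processed once'' premise of your amortization. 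The standard remedy, adopted in the paper's pseudocode, is to delete $b$ from $D_y$ at enqueue time so that the membership test blocks the second insertion. Your correctness argument is the direct greatest-fixed-point one (the surviving domains are the largest arc-consistent family, so accept iff all are nonempty); the paper instead phrases correctness through the existential 2-pebble game via two short claims, which it needs anyway for the later sections, but for this theorem alone your route is equally valid and somewhat more self-contained.
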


\begin{algorithm}[t]
\caption{\ACgraph}
\begin{algorithmic}
\State Input: Two colored connected graphs $G$ and $H$.
\State /*INITIALIZATION*/
\ForAll{$x \in V(G)$} 
	\State $D_x \gets \{a\in V(H)\mid a$ has the same color as $x\}$; 
	\If{$D_x=\emptyset$} \Return reject; \EndIf
	\EndFor
\ForAll{$x\in V(G)$, $a \in V(H)$}
	\State \texttt{counter[$x$,$a$]} $\gets |N(a)|$; 
	\If{$a \notin D_x$} 
		 add $(x,a)$ to $Q$; \EndIf
\EndFor
\State /*PROPAGATION*/
\While{$Q$ not empty}
	\State Select and remove $(x,a)$ from $Q$;
	\ForAll{$b \in N(a)$}
		\State \texttt{counter[$x$,$b$]} $\gets$ \texttt{counter[$x$,$b$]}$ - 1$;
		\If{\texttt{counter[$x$,$b$]}$=0$} 
			\ForAll{$y\in N(x)$}
				\If{$b\in D_y$}
					\State Delete $b$ from $D_y$;
					\State Add $(y,b)$ to $Q$;
					\If{$D_y=\emptyset$} \Return reject; \EndIf
				\EndIf
			\EndFor
		\EndIf
	\EndFor
\EndWhile
\State \Return accept;
\end{algorithmic}
\end{algorithm}

\begin{proof} 
We first analyze the running time. The initialization phase requires $O(v(G)v(H))$. The propagation phase takes $|N(a)|$ steps for every $(x,a)\in Q$ and $|N(x)|$ steps for every $(x,b)$ such that $\texttt{counter[$x$,$b$]}$ gets $0$. Since every pair is only put once on the queue and every counter voids out only once the total running time of the propagation phase bound by $\sum_{(x,a)\in V(G)\times V(H)} (|N(x)|+|N(a)|) = v(G)e(H)+e(G)v(H)$.

The rest is devoted to the proof of the algorithm's correctness.
Translated into the language
of the existential 2-pebble game, the problem is to decide for a given pair
of colored graphs $G$ and $H$ which of two cases  occurs:
Spoiler has a winning strategy for some number of rounds or Duplicator
has a winning strategy for any number of rounds.
Simplifying the terminology, we will say that \emph{Spoiler wins}
in the former case and \emph{Duplicator wins} in the latter case.
We begin with auxiliary notions and claims, then show that a modified version
of the algorithm is correct, and finally come back to the original version.

Given a pair $(x,a)\in V(G)\times V(H)$, we denote the existential 2-pebble game
with the initial position $(x,a)$ by $\game(x,a)$.
Suppose that $S\subsetneq V(G)\times V(H)$ is a set of pairs $(x,a)$
such that Spoiler wins the game $\game(x,a)$. We will assume that
$S$ contains all pairs of differently colored vertices.

Let $(y,b)\notin S$. Given $x\in V(G)$ and $a\in V(H)$, we call $a$
a \emph{partial $x$-certificate for} $(y,b)$ if $y\in N(x)$, $b\in N(a)$,
and $(x,a)\in S$. Assuming $y\in N(x)$, we denote the set of all partial 
$x$-certificates for $(y,b)$ by $\cert_y(x,b)$. Note that $\cert_y(x,b)=S|_x\cap N(b)$,
where $S|_x=\setdef{a\in V(H)}{(x,a)\in S}$ is the $x$-slice of $S$.
It follows that 
\begin{equation}
  \label{eq:certyy}
\cert_y(x,b)=\cert_{y'}(x,b)\text{ for any two }y,y'\in N(x),   
\end{equation}
that is, $\cert_y(x,b)$ actually does not depend on~$y$.

Furthermore, call $x$ a \emph{complete certificate for} $(y,b)$ if $\cert_y(x,b)=N(b)$.
The first of two following claims is straightforward.

\begin{clm}\label{cl:a}
If $(y,b)$ has a complete certificate, then Spoiler wins $\game(y,b)$.  
\end{clm}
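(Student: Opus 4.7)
The plan is a two-phase Spoiler strategy for $\game(y,b)$: phase one plays $x$ in the first round in order to force a position of the form $((y,x),(b,a))$ with $(x,a)\in S$, and phase two cashes in the winning strategy guaranteed on $\game(x,a)$.

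For phase one, recall that $x\in N(y)$ by the definition of a certificate, so after Spoiler's first move Duplicator's response $a$ either violates the partial-homomorphism condition on $((y,x),(b,a))$ (in which case Spoiler has already won after one round) or lies in $N(b)$ and matches the color of $x$. In the latter case the complete-certificate hypothesis $\cert_y(x,b)=N(b)$ gives $a\in\cert_y(x,b)\subseteq S|_x$, so $(x,a)\in S$; by the defining property of $S$, Spoiler then has a winning strategy on $\game(x,a)$.

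Phase two requires Spoiler to win from the two-pebble position $((y,x),(b,a))$ with himself to move, using only his winning strategy on the single-pebble game $\game(x,a)$. The idea is that Spoiler always replaces the pebble sitting on $(y,b)$: his move at this position lifts that pebble and places it where his round-$1$ move in $\game(x,a)$ would go, reducing the configuration to the post-first-round position of that game; the simulation then continues round by round. The main obstacle is recording, as a small monotonicity observation, that a winning strategy from a single-pebble initial position extends to a winning strategy from any compatible two-pebble extension — once this is in place, nothing further is needed, and the argument uses only the definition of a complete certificate together with the hypothesis that $S$ consists of pairs from which Spoiler wins.
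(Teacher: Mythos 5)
Your proof is correct and is exactly the argument the paper has in mind: the authors merely declare Claim A ``straightforward'' and give no proof, and your two-phase strategy (pebble the complete certificate $x$, note that any non-losing reply $a$ lies in $N(b)=\cert_y(x,b)\subseteq S|_x$, then simulate the winning strategy for $\game(x,a)$ by always re-using the pebble that sat on $(y,b)$) is the intended one-move reduction. The monotonicity point you flag is harmless here, since after Spoiler lifts that pebble the position coincides exactly with the post-first-round position of $\game(x,a)$.
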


\begin{clm}\label{cl:b}
  Let $G$ be connected. If Spoiler wins the existential 2-pebble game on $G$ and $H$,
then there exists a pair $(y,b)\notin S$ having a complete certificate.
\end{clm}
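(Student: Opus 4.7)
The plan is to argue the contrapositive: assuming that no pair $(y,b)\notin S$ possesses a complete certificate, I shall exhibit a Duplicator strategy that survives every round of the existential $2$-pebble game on $G$ and $H$, contradicting the hypothesis that Spoiler wins. The argument splits into a structural consequence of the assumption, which is where connectivity of $G$ is used, and a routine case analysis describing Duplicator's responses.

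The structural step: set $U=\{y\in V(G):\{y\}\times V(H)\subseteq S\}$. I will show that $U=\emptyset$. Take any $y\in U$ and any neighbor $y'\in N(y)$, which exists because $G$ is connected and $v(G)\ge 2$. If some $b\in V(H)$ satisfied $(y',b)\notin S$, then by the definition of $U$ every $b'\in N(b)$ would give $(y,b')\in S$, so $\cert_{y'}(y,b)=N(b)$ and $y$ would be a complete certificate for $(y',b)$, contrary to assumption. Hence $N(y)\subseteq U$ whenever $y\in U$, so $U$ is a union of connected components of $G$. Connectivity forces $U\in\{\emptyset,V(G)\}$, and $U=V(G)$ would give $S=V(G)\times V(H)$, contradicting $S\subsetneq V(G)\times V(H)$. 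Therefore $U=\emptyset$.

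Duplicator's strategy maintains the invariant that every currently pebbled pair $(y,b)$ lies outside $S$. When Spoiler places a pebble on $y\in V(G)$, suppose the other pebble, if any, rests on $(y',b')$ with $(y',b')\notin S$. If $y=y'$, Duplicator plays $b'$, and the edge constraint is vacuous since colored graphs are irreflexive. If $y\ne y'$ and $y\in N(y')$, the non-existence of a complete certificate for $(y',b')$ applied to the neighbor $y$ of $y'$ supplies some $b\in N(b')$ with $(y,b)\notin S$, which she plays. If $y\ne y'$ with $y\notin N(y')$, or only one pebble is on the board, then $U=\emptyset$ provides some $b$ with $(y,b)\notin S$, and she plays it. In every case the edge constraint holds by construction, and the color constraint holds because $S$ contains all differently colored pairs, so $(y,b)\notin S$ forces matching colors. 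Duplicator thus survives indefinitely, contradicting $D(G,H)<\infty$. The only delicate step is the structural observation that $U=\emptyset$, since that is where connectivity of $G$ plays an essential role; the remainder of the argument is a mechanical case analysis.
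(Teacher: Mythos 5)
Your proof is correct and follows essentially the same route as the paper: your set $U$ is exactly the paper's set of \emph{complete} vertices (those $x$ with $S|_x=V(H)$), your closure-under-neighbors argument combined with connectivity is the paper's argument that no vertex is complete, and your Duplicator strategy maintaining that pebbled pairs stay outside $S$ is the paper's strategy verbatim. No gaps; the extra care you take with the $y=y'$ case and the color constraint only makes explicit what the paper leaves implicit.
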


\begin{subproof}
Assume that no $(y,b)\notin S$ has a complete certificate and show that then
Duplicator wins the game on $G$ and $H$.

Call a vertex $x\in V(G)$ \emph{complete} if $S|_x=V(H)$.
Under the assumption made, no vertex of $G$ is complete.
Indeed, if $x$ is complete, then any adjacent to it vertex $y$
must be complete too because otherwise we would have $(y,b)\notin S$
for some $b\in V(H)$ and then $x$ would be a complete certificate
for $(y,b)$, contradictory to the assumption.
It follows by connectedness of $G$, that $S=V(G)\times V(H)$
while $S$ is supposed to be a proper subset.

The absence of complete vertices leads to the following winning
strategy for Duplicator. Assume that Spoiler pebbles a vertex $y$ in the first round.
Duplicator responds with a vertex $b$ such that $(y,b)\notin S$. Such $b$ exists
since $b$ is not complete. Let Spoiler pebble a vertex $x$ in the next round.
If $x$ and $y$ are non-adjacent, Duplicator reponds similarly (with a vertex $a$
such that $(x,a)\notin S$). If $x$ and $y$ are adjacent, then Duplicator reponds
with a vertex $a$ adjacent to $b$ such that $(x,a)\notin S$. Such $a$ exists
because otherwise $x$ would be a complete certificate for $(y,b)$.
Each subsequent round is played similarly.
\end{subproof}

We are now ready to describe an algorithm solving the existential 2-pebble game on
connected colored graphs $G$ and $H$. We will maintain a set $S\subset V(G)\times V(H)$
of pairs $(x,a)$ for which it is for sure known (certified) that Spoiler wins $\game(x,a)$.
Initially, $S$ cosists of those $(x,a)$ with $x$ and $a$ colored differently.
Our algorithm will try step by step to extend $S$. If no extention is possible any more,
the algorithm decides that Spoiler wins if $S$ reaches the full product $V(G)\times V(H)$
and that Duplicator wins if $S$ stays its proper subset.

Each time $S$ will be extended with a pair $(y,b)\notin S$ having a complete certificate.
Note that
the correctness of this procedure is ensured by Claims \ref{cl:a} and \ref{cl:b}.
By Claim \ref{cl:b}, an extension is always possible if Spoiler wins.
Thus, if Spoiler wins, the algorithm's decision will be correct because then eventually
$S=V(G)\times V(H)$. On the other hand, 
Claim \ref{cl:a} implies that $S$ consists of positions winning for Spoiler. Therefore,
if $S=V(G)\times V(H)$, then Spoiler really has a winning strategy in the game on $G$ and~$H$.

We now explain how our algorithms finds a pair $(y,b)\notin S$ with a complete certificate.
For this purpose, another set $Q\subset S$ is maintained. This set consists of 
\emph{influential} pairs $(x,a)\in S$ producing a partial $x$-certificate
for at least one pair $(y,b)\notin S$. Initially, $Q=S$ is the set of pairs of differently 
colored vertices. For each $(y,b)\notin S$ and $x\in V(G)$, we also have a counter
$\cnt_y(x,b)$ for the number of vertices $a\in N(b)$ that are still not accepted
as a partial $x$-certificate for $(y,b)$. Initially, $\cnt_y(x,b)=\deg b$.
The algorithm updates $S$ as follows. It takes an arbitrary pair $(x,a)\in Q$
and accepts $a$ as a partial $x$-certificate for all $(y,b)$ such that $y\in N(x)$
and $b\in N(a)$ by decreasing the value of $\cnt_y(x,b)$ in 1.
After this is done, the pair $(x,a)$ is not influential any more
and is removed from $Q$. Once $\cnt_y(x,b)=0$ for some $(x,b)$, this pair 
receives a complete certificate, namely $x$, and is added to both $S$ and $Q$.
This completes description of our algorithms.

The algorithm \ACgraph\ is pretty close to the slightly simplified version we just described.
Instead of $S$, \ACgraph\ maintains the set $D_x=V(H)\setminus S|_x$ for each $x\in V(G)$
and terminates as soon as $D_x=\emptyset$ for some $x$. Moreover,
the counter $\cnt_y(x,b)$ is parametrized only by $x$ and $b$,
which is justified by the equality~\refeq{certyy}.
\end{proof}

\subsection{Lower bounds}
\label{sec:time-lower}

Recall that by
a \textit{propagation-based arc consistency algorithm} we mean an algorithm that solves the \GAME{} by iteratively deleting possible assignments $a$ to a variable $x$ from the domain $D_x$ according to the arc consistency rule and rejects iff one domain gets empty. 
Let us maintain a list $L$ of deleted variable-value pairs by putting a pair $(x,a)$ there once $a$ is deleted from $D_x$. If the algorithm detects arc-inconsistency, then it is evident that $L$, prepended with axioms and appended with $\bot$, forms a proof of Spoiler's win. 
Thus, a propagation-based arc consistency algorithm can be viewed as a proof search algorithm that produces (in by-passing) a proof $P$ of Spoiler's win. This situation is related to the concept of a \emph{certifying algorithm} \cite{McConnell.2011}: Propagation-based algorithms not just detect Spoiler's win but also produce its certificate.
\hide{%
The propagation approach is guaranteed to produce one arbitrary proof of Spoiler's win, but not necessarily the shortest one. In fact, different algorithms can produce different proofs and this observation has been used to design heuristics for ordering the propagation list in arc consistency algorithms (as $Q$ in AC-graph) in order to decrease the number of propagation steps \cite{Bessiere.2006}. 
Now we can relate the time complexity of a propagation-based arc consistency algorithm to the length and size of the underlying proof it produces. In the next section we focus on parallel arc consistency algorithms and the proof depth. First, the running time of an algorithm that produces a proof $P$ is lower bounded by the number $\lenast{P}$ of derived variable-value pairs in $P$. 
}
For every derived element of $P$ an algorithm has to recognize its already derived parents.
This allows us to relate the running time to the proof size. Specifically, given 
an arbitrary propagation-based algorithm for the \GAME{}, let $\Time{G,H}$
denote the time it takes on input $(G,H)$. If the input $(G,H)$ is arc-inconsistent,
then it holds
\begin{equation}
  \label{eq:time>size}
\Time{G,H}\ge\s{G,H}.  
\end{equation}

\begin{theorem}\label{thm:time-lower}
Fix an arbitrary propagation-based algorithm.
\begin{enumerate}
\item 
Let $T_1(k,l)$ denote the worst working time of this algorithm over
colored graphs $G$ and $H$ with $e(G)=k$ and $e(H)=l$.
Then $T_1(k,l)>\frac18\,(k-1)(l-4)$ for all $k$ and~$l$. 
\item 
Let $T_2(n)$ denote the worst working time of the algorithm on
inputs $(G,H)$ with $v(G)+v(H)=n$.
Then $T_2(n)>\frac1{16}\,n^3$ for all large enough~$n$.
\end{enumerate}
\end{theorem}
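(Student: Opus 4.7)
The plan is to deduce both bounds from the inequality $\Time{G,H}\ge\s{G,H}$ established in~\refeq{time>size}: on any arc-inconsistent input, a propagation-based algorithm implicitly produces a Spoiler-win proof, and its running time is at least the number of edges in that proof. It therefore suffices, in each part, to exhibit a concrete hard family of colored graphs whose proof size $\s{G,H}$ grows at the required rate in the relevant parameter, and then invoke~\refeq{time>size}.

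For part 1, I would plug in the pair furnished by Corollary~\ref{cor:cycle-tree}: a unicyclic colored graph $G_M$ with $v(G_M)=M$ and a tree $H_N$ with $v(H_N)=N$ satisfying $D(G_M,H_N)>\frac{1}{8}(M-1)(N-5)$. Because $G_M$ is unicyclic we have $e(G_M)=M$, and because $H_N$ is a tree we have $e(H_N)=N-1$. Writing $k=e(G_M)=M$ and $l=e(H_N)=N-1$, Theorem~\ref{thm:proofs}, parts 1 and 2, gives $\s{G_M,H_N}\ge\dd{G_M,H_N}=D(G_M,H_N)$, and combining with~\refeq{time>size} yields exactly $T_1(k,l)>\frac{1}{8}(k-1)(l-4)$, using that $N-5=l-4$.

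For part 2, I would use the cubic proof-size lower bound from Theorem~\ref{thm:proofs}, part 4: for each large $N$ there is a pair $(G_N,H_N)$ with $v(G_N)=v(H_N)=N$ and $\s{G_N,H_N}>\frac{1}{128}N^3$. Setting $n=2N$ and invoking~\refeq{time>size} at once gives $T_2(n)=\Omega(n^3)$. To extract the explicit constant $\frac{1}{16}$ claimed, I would re-optimize the co-wheel-plus-universal-vertices construction from the proof of Theorem~\ref{thm:proofs}(4): I would allow $v(G)$ and $v(H)$ to be unequal and tune the number $U$ of green universal vertices added to $H$ so as to maximize the bound $\s{G,H}\ge\dd{G,H}\cdot\delta(H)$ subject to $v(G)+v(H)=n$. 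The three free parameters to balance are the co-wheel cycle length, the count $U$, and the split of $n$ between $v(G)$ and $v(H)$.

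All of the conceptual work—relating proof size to running time in~\refeq{time>size} and exhibiting pairs with large $D(G,H)$ and large $\s{G,H}$—is already done in the previous sections. The only remaining obstacle is purely quantitative bookkeeping: squeezing the stated constants, in particular the $\frac{1}{16}$ in part 2, out of a careful optimization over the three parameters in the construction.
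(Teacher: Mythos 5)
Your proposal takes essentially the same route as the paper: part 1 is obtained exactly as in the paper's proof, by feeding the unicyclic/tree pair of Corollary~\ref{cor:cycle-tree} into the chain $\Time{G,H}\ge\s{G,H}\ge\dd{G,H}=D(G,H)$ and translating $v(G)=k$, $v(H)=l+1$ into edge counts; part 2 is likewise the paper's one-line deduction from part 4 of Theorem~\ref{thm:proofs}. One caution about the step you defer as ``quantitative bookkeeping'': with $n=2N$, part 4 of Theorem~\ref{thm:proofs} yields only $T_2(n)>n^3/1024$, and the re-optimization you sketch cannot reach $\frac1{16}n^3$ --- the colored-graph co-wheels lose a factor of about $18$ in $D(G,H)$ from the subdivision gadget, and splitting $n$ among $v(G)$, $v(H)$ and the universal vertices caps $abU$ at $(n/3)^3$, so the method tops out near $n^3/486$. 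The paper's own proof of part 2 is the same citation and does not recover the constant $\frac1{16}$ either, so this is a defect in the stated constant rather than a gap in your argument; you have established the bound $T_2(n)=\Omega(n^3)$ exactly as the paper does.
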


\comm{%
I think that part 1 can be stated in a weaker but a bit more elegant form,
namely $t(k,l)\ge\frac1{16}\,kl$ for all $k$ and~$l$.
This costs some work because small $k$ and $l$ should then be treated
separately.}

\begin{proof}
By Corollary \ref{cor:cycle-tree}, there are colored graphs $G_k$ with $e(G_k)=v(G_k)=k$ and
$H_l$ with $e(H_l)=v(H_l)-1=l$ for which $D(G_k,H_l)$ is finite but large, specifically,
$D(G_k,H_l)>\frac18\,(k-1)(l-4)$. By the relation \refeq{time>size},
on input $(G_k,H_l)$ the algorithm takes time at least $\s{G_k,H_l}$,
for which we have $\s{G_k,H_l}\ge\dd{G_k,H_l}=D(G_k,H_l)$ by part 1 of Theorem~\ref{thm:proofs}.

Part 2 follows from part 4 of Theorem~\ref{thm:proofs}.
\end{proof}

\begin{corollary}
  In terms of the parameters $e(G)$ and $e(H)$, the time bound $O(e(G)\allowbreak\cdot e(H))$
is optimal up to a constant factor among propagation-based algorithms.
\end{corollary}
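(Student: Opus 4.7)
The plan is to derive the corollary by combining two already-established results in the paper: the upper bound obtained by \ACgraph\ in Section~\ref{sec:time-upper} and part~1 of Theorem~\ref{thm:time-lower}.

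For the upper bound, I would start from the fact that \ACgraph\ solves the \GAME\ in time $O(v(G)e(H)+e(G)v(H))$ on connected colored graphs with at least two vertices. Because every such graph $F$ satisfies $v(F)\le e(F)+1\le 2\,e(F)$ (a connected graph on $n\ge 2$ vertices has at least $n-1$ edges), the bound rewrites as $O(e(G)\cdot e(H))$. The linear-time reduction of Lemma~\ref{lem:reduction} absorbs the general case with only constant-factor overhead in the edge counts, so a propagation-based algorithm indeed meets this bound on arbitrary inputs.

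For the matching lower bound, I would simply invoke part~1 of Theorem~\ref{thm:time-lower}: it provides, for every $k$ and $l$, colored graphs $G,H$ with $e(G)=k$ and $e(H)=l$ on which any propagation-based algorithm uses time at least $\frac{1}{8}(k-1)(l-4)=\Omega(e(G)\cdot e(H))$. Juxtaposing this with the upper bound from the previous paragraph gives the tightness claim of the corollary.

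Since both ingredients are already proved, I do not expect a substantive obstacle here; the corollary is essentially a repackaging. The only minor point that needs attention is the passage from the $v$-and-$e$ form of the upper bound to its pure-$e$ form, which uses only the trivial inequality $v\le e+1$ valid for connected graphs on at least two vertices, together with the fact (used already in Section~\ref{sec:reduction}) that restricting the \GAME\ to connected colored graphs is without loss of generality.
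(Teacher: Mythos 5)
Your proposal is correct and follows exactly the route the paper intends: the corollary is stated without a separate proof precisely because it is the juxtaposition of the $O(v(G)e(H)+e(G)v(H))$ upper bound (which becomes $O(e(G)e(H))$ via $v\le e+1$ on connected graphs) with the $\Omega(e(G)e(H))$ lower bound of part~1 of Theorem~\ref{thm:time-lower}. Your attention to the connectedness assumption and the reduction of Lemma~\ref{lem:reduction} is exactly the right bookkeeping.
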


Note that
$O(e(G)v(H) + v(G)e(H))=O((v(G) + v(H))^3)$.

\begin{corollary}
  In terms of the parameter $n=v(G) + v(H)$, the time bound $O(n^3)$
is best possible for a propagation-based algorithm.
\end{corollary}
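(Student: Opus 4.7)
The plan is to combine the already-recorded upper bound with part 2 of Theorem \ref{thm:time-lower}, showing that the two match up to a constant factor in the parameter $n=v(G)+v(H)$.

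First, I would invoke the upper bound noted just above the corollary: $O(e(G)v(H) + v(G)e(H)) = O((v(G)+v(H))^3) = O(n^3)$. This is elementary since $e(G) \le \binom{v(G)}{2}$ and $e(H) \le \binom{v(H)}{2}$, so each of the two summands is at most $O(n^3)$. Since the algorithm \ACgraph{} achieves the time bound $O(v(G)e(H)+e(G)v(H))$ and is itself propagation-based, the $O(n^3)$ upper bound is indeed attained.

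For the matching lower bound, I would directly apply part 2 of Theorem \ref{thm:time-lower}, which asserts that for every propagation-based algorithm the worst-case running time $T_2(n)$ on inputs $(G,H)$ with $v(G)+v(H)=n$ satisfies $T_2(n) > \frac{1}{16}\,n^3$ for all large enough $n$. This shows that no propagation-based algorithm can beat the cubic bound asymptotically.

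There is essentially no obstacle here: both ingredients are already in place, and the corollary is just the combination of the two matching bounds. The only minor care is to note that the upper bound applies to a specific propagation-based algorithm (\ACgraph{}) while the lower bound applies uniformly to any propagation-based algorithm, so the two together precisely justify the claim that $O(n^3)$ is best possible within this class.
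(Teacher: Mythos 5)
Your argument is correct and is exactly the one the paper intends: the corollary is an immediate combination of the upper bound $O(e(G)v(H)+v(G)e(H))=O(n^3)$ achieved by the propagation-based algorithm \ACgraph{} and the lower bound $T_2(n)>\frac1{16}\,n^3$ from part 2 of Theorem \ref{thm:time-lower}. No gaps.
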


\subsection{Parallel complexity}
It is known that the \GAME{} is PTIME-complete under logspace-reductions \cite{Kasif.1990,Kolaitis.2003}. Under the assumption that PTIME $\neq$ NC, it follows that the \GAME{} cannot be parallelized. However, several parallel algorithms with a polynomial number of processors appear in the literature (e.g., \cite{Samal.1987}). 
\hide{%
This was first proven by Kasif \cite{Kasif.1990}; Kolaitis and Panattaja \cite{Kolaitis.2003} presented an alternative proof for colored graphs using the characterization as existential 2-pebble game and extended this result to higher levels of consistency. Under the assumption that PTIME $\neq$ NC, it follows that the \GAME{} cannot be parallelized. However, several parallel algorithms with a polynomial number of processors appear in the literature (e.g., \cite{Samal.1987}). 
\hide{TODO: further references}
The approach taken there is to deduce inconsistent values in parallel using the arc consistency rule. In the same fashion we can parallelize AC-graph by processing all elements of the queue $Q$ in parallel. The worst-case time complexity is then easily seen to be upper bounded by the number of successive updates of the queue which in turn is upper bounded by $O(v(G)v(H))$. Moreover, there is 
}
We are able to show
a tight connection between the running time of a parallel algorithm and the round complexity of the existential 2-pebble game. 
The following result is worth noting since $D(G,H)=\dd{G,H}$ can be much smaller that $\s{G,H}$
(cf.\ Remark \ref{rem:depth-length}),
and then a parallel propagation-based algorithm can be much faster than any sequential propagation-based algorithm.
 
\begin{theorem}
\mbox{}

  \begin{enumerate}
  \item 
\GAME{} can be solved in time $O(D(G,H))$ on a CRCW-PRAM with polynomially many processors.
\item 
Any parallel propagation-based arc consistency algorithm needs time $D(G,H)$ on arc-inconsistent instances $(G,H)$.
  \end{enumerate}
\end{theorem}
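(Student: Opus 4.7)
The plan is to reduce both parts to part~1 of Theorem~\ref{thm:proofs}, which identifies $\dd{G,H}$ with $D(G,H)$, together with the observation already used in Section~\ref{sec:proofs} that the arc consistency propagation step is syntactically identical to a derivation step in the winner proof system.

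For part~1, I would parallelize \ACgraph{} in the obvious way. Allocate one processor for every pair $(x,a)\in V(G)\times V(H)$ and one for every triple $(x,a,b)$ with $b\in N(a)$. After an $O(1)$-time initialization that sets each $D_x$ according to vertex color, the algorithm proceeds in synchronous rounds. In one round every pair $(x,a)$ still in $D_x$ tests, in parallel over the incident edges $\{x,y\}$, whether $N(a)\cap D_y=\emptyset$; on a CRCW-PRAM this is a constant-time OR across the processors indexed by $b\in N(a)$. All pairs passing the test are deleted simultaneously. The algorithm rejects as soon as some $D_x$ becomes empty and accepts once a round produces no deletion. A routine induction on $t$ shows that the pairs deleted by round $t$ are exactly those admitting a proof of Spoiler's win of depth at most~$t$, so on an arc-inconsistent input rejection occurs after $\dd{G,H}=D(G,H)$ rounds, while on an arc-consistent input the computation stabilizes in at most $v(G)v(H)+1$ rounds.

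For part~2, the plan is to extract a proof from the trace of any parallel propagation-based algorithm whose depth equals the running time. Fix an arc-inconsistent input on which the algorithm rejects at round~$T$. Label every pair $(x,a)$ deleted in round $t$ with the integer $t$, and label the rejecting event (the emptying of some $D_y$) with~$T$. By definition of the arc consistency rule, the deletion of $(x,a)$ in round $t$ is witnessed by a neighbor $y$ of $x$ such that every $b\in N(a)$ has already been removed from $D_y$ in some round strictly smaller than~$t$; this is exactly a depth-$t$ derivation of $(x,a)$ in the sense of Section~\ref{sec:proofs}, and the emptying event analogously gives a depth-$T$ derivation of $\bot$. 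The labels therefore assemble into a valid proof of depth $T$, and the minimality of $\dd{G,H}$ gives $T\ge\dd{G,H}=D(G,H)$.

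The main technical obstacle is the induction in part~1, namely verifying that ``deleted by round $t$'' coincides with ``derivable at depth $\le t$''. The upper-bound direction is immediate from the parallel rule mirroring a single derivation step; the reverse direction requires checking that the algorithm never fires prematurely, which holds because the rule only deletes $(x,a)$ once every potential witness in $N(a)$ is already absent from $D_y$. This exact parity between the parallel deletion rule and the syntactic derivation rule is what makes both halves of the theorem fall out simultaneously.
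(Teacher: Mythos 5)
Your proposal is correct and follows essentially the same route as the paper: parallelize \ACgraph{} into synchronous rounds and show by induction that the pairs deleted by round $t$ are exactly those derivable at depth $\le t$ (equivalently, by part~1 of Theorem~\ref{thm:proofs}, the positions from which Spoiler wins in at most $t$ rounds), so rejection occurs after $D(G,H)$ rounds. For the lower bound the paper likewise observes that any parallel propagation-based algorithm implicitly constructs a proof of Spoiler's win whose depth is bounded by the running time, giving $T\ge\dd{G,H}=D(G,H)$.
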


\begin{proof}
First apply the reduction from Lemma \ref{lem:reduction} to colored simple graphs in constant parallel time. 
Consider the parallelized version of AC-graph and assume that Spoiler wins the existential 2-pebble game on $G$ and $H$. The initialization phase can be implemented in constant time. For every element $(x,a)\in Q$ the propagation phase can also be processed in constant time. The algorithm iteratively processes the whole set $Q$ in parallel constant time and computes a new set $Q'$. Let $Q_i$ be the queue $Q$ after the $i$th iteration of the propagation phase and $Q_{\leq i} := \bigcup_{j \leq i} Q_j$. An easy induction shows that $Q_{\leq i}$ (for $i<D(G,H)$) is the set of all positions $(x,a)\in V(G)\times V(H)$ such that Spoiler wins in at most $i$ steps. Hence, for $l=D(G,H)$-1 it holds that $\{x\}\times V(H)\subseteq Q_{\leq l}$ where $x$ is the first vertex Spoiler puts a pebble on. But this means that in iteration $l$ all values from $D_x$ are already deleted and hence the algorithm rejects.

A parallel propagation-based algorithm produces proofs of Spoiler's win. Hence, the algorithm can derive $(x,a)$ only if the parents in the underlying proof have been derived. It follows that the parallel running time has to be lower bounded by $\dd{G,H}=D(G,H)$.
\end{proof}

\section{Conclusion and further questions}

We investigated the round complexity $D(G,H)=D^2(G,H)$ of the existential 2-pebble game on colored graphs and established lower bounds of the form $\Omega(v(G)v(H))$, which translate to lower bounds on the nested propagation steps in arc consistency algorithms. The next step in this line of research is to investigate the number of rounds $D^3(G,H)$ in the existential 3-pebble game that interacts with \emph{path consistency} algorithms in the same way as the 2-pebble game with arc consistency. Note that, similarly to
Corollary \ref{cor:upper}, $D^3(G,H) = O(v(G)^2v(H)^2)$. Ladkin and Maddux \cite{Ladkin.1994} showed that $D^3(G,H) = \Omega(v(G)^2)$ using  algebraic methods (where $H$ is a graph of constant size). Using methods from \cite{Berkholz.2012} one can construct examples of graphs with $D^3(G,H)= \Omega(v(H)^2)$ (where $G$ is a graph of constant size). It remains for future work to exhibit graphs $G_n, H_n$, both on $n$ vertices, such that $D^3(G_n,H_n) = \Omega(v(G)^2v(H)^2)$. This would translate to an $\Omega(n^4)$ lower bound for sequential and parallel path consistency algorithms. 

By Corollary \ref{cor:upper}, $D(A,B)\le v(A)v(B)+1$ for arbitrary binary structures $A$ and $B$ with $D(A,B)<\infty$.
On the other hand, for the \DOMINO\  example $A_m,B_n$ with even $n$
we have $D(A_m,B_n)>\frac12\,v(A_m)v(B_n)$. An interesting problem
is to close the gap between these bounds. We conjecture
that the lower bound of $\frac12\,v(A)v(B)$ is sharp.

Finally, we want to stress that our lower bounds for the time complexity of arc consistency hold only for constraint propagation-based algorithms. Is there a faster way to solve the \GAME{} using a different approach?


\end{document} 
